\begin{document}

\def\urltilda{\kern -.15em\lower .7ex\hbox{\~{}}\kern .04em}
\def\urldot{\kern -.10em.\kern -.10em}
\def\urlhttp{http\kern -.10em\lower -.1ex\hbox{:}\kern -.12em\lower 0ex\hbox{/}\kern -.18em\lower 0ex\hbox{/}}

\mainmatter

\title{Gradual sub-lattice reduction and a new complexity for factoring polynomials}

\titlerunning{Gradual sub-lattice reduction}

\author{Mark van Hoeij\inst{1}\thanks{Supported by NSF 0728853} \and Andrew Novocin\inst{2}}

\authorrunning{Hoeij and Novocin}   

\institute{Florida State University, 208 Love Building Tallahassee, FL 32306-4510 \\ \texttt{hoeij@math\urldot fsu\urldot edu} -- \texttt{\urlhttp www\urldot math\urldot fsu\urldot edu/\urltilda hoeij} \\
\and
LIP/INRIA/ENS, 46 all\'ee d'Italie, F-69364 Lyon Cedex 07, France \\ \texttt{Andrew\urldot Novocin@ens-lyon\urldot fr} -- \texttt{\urlhttp andy\urldot novocin\urldot com/pro} }



\maketitle

\begin{abstract}
  \noindent We present a lattice algorithm specifically designed for some classical applications of lattice reduction.  The applications are for lattice bases with a generalized knapsack-type structure, where the target vectors are boundably short.  For such applications, the complexity of the algorithm improves traditional lattice reduction by replacing some dependence on the bit-length of the input vectors by some dependence on the bound for the output vectors.  If the bit-length of the target vectors is unrelated to the bit-length of the input, then our algorithm is only linear in the bit-length of the input entries, which is an improvement over the quadratic complexity floating-point LLL algorithms.  To illustrate the usefulness of this algorithm we show that a direct application to factoring univariate polynomials over the integers leads to the first complexity bound improvement since 1984.  A second application is algebraic number reconstruction, where a new complexity bound is obtained as well.
\end{abstract}

\makeatletter
\def\Ddots{\mathinner{\mkern1mu\raise\p@
\vbox{\kern7\p@\hbox{.}}\mkern2mu
\raise4\p@\hbox{.}\mkern2mu\raise7\p@\hbox{.}\mkern1mu}}
\makeatother

\newtheorem{ass}{Assumption}
\newtheorem{alg}{Algorithm}
\newcommand{\rv}{n_{\textrm{rm}}}

\section{Introduction}\label{intro}

\noindent Lattice reduction algorithms are essential tools in computational number theory and cryptography.  A lattice is a discrete subset of $\mathbb{R}^n$ that is also a $\mathbb{Z}$-module.  The goal of lattice reduction is to find a `nice' basis for a lattice, one which is near orthogonal and composed of short vectors.  Since the publication of the 1982 Lenstra, Lenstra, Lov\'asz~\cite{LLL} lattice reduction algorithm many applications have been discovered, such as polynomial factorization~\cite{LLL,Hoeij} and attacking several important public-key cryptosystems including knapsack cryptosystems~\cite{knapsack}, RSA under certain settings~\cite{rsa}, and DSA and some signature schemes in particular settings~\cite{dsa}.  One of the important features of the LLL algorithm was that it could approximate the shortest vector of a lattice in polynomial time.  This is valuable because finding the exact shortest vector in a lattice is provably NP-hard~\cite{ajtai,svp}.  Given a basis ${\bf b}_ 1, \ldots, {\bf b}_ d \in \mathbb{R}^n$ which satisfies $\parallel {\bf b}_{i} \parallel \leq X ~~\forall i$, the LLL algorithm has a running time of $\mathcal{O}(d^5 n \log^3{X})$ using classical arithmetic.  Recently there has been a resurgence of lattice reduction work thanks to Nguyen and Stehl\'e's ${\rm L}^2$ algorithm~\cite{fpLLL,fpLLL2} which performs lattice reduction in $\mathcal{O}(d^4 n \log{X} [d+\log{X}])$ CPU operations.  The primary result of ${\rm L}^2$ was that the dependence on $\log{X}$ is only quadratic allowing for improvement on applications using large input vectors.

\noindent \textbf{The main result:}  Many applications of LLL (see the applications section below) involve finding a vector in a lattice whose norm is known to be small in advance.  In such cases it can be more efficient to reduce a basis of a sub-lattice which contains all targeted vectors than reducing a basis of the entire lattice.  In this paper we target short vectors in specific types of input lattice bases which we call knapsack-type bases.  The new algorithm introduces a search parameter $B$ which the user provides.  This parameter is used to bound the norms of targeted short vectors.  To be precise: 

\vspace{.1in}

The {\em rows} of the following matrices represent a knapsack-type basis 

\begin{center}
$ \left( \begin{array}{ccc|ccc}
0 &  \cdots & 0 & 0     & \cdots & P_N\\
0 &  \cdots & 0 & 0     & \Ddots & 0\\
0 &  \cdots & 0 & P_1   & \cdots & 0\\
\hline 
1 & \cdots & 0 & x_{1,1}  &\cdots & x_{1,N}\\ 
\vdots &  \ddots & \vdots & \vdots &   & \vdots \\ 
0 & \cdots & 1 &  x_{r,1} &\cdots & x_{r,N}\\
\end{array} \right)$
\textrm{ or }
$ \left( \begin{array}{ccc|ccc}
1 &  \cdots & 0 & x_{1,1} &\cdots & x_{1,N}\\ 
\vdots & \ddots & \vdots & \vdots &   & \vdots \\ 
0 & \cdots & 1 &  x_{r,1} &\cdots & x_{r,N}\\
\end{array} \right)$.
\end{center}

\noindent The specifications of our algorithm are as follows.  It takes as input a knapsack-type basis ${\bf b}_ 1, \ldots, {\bf b}_ d \in \mathbb{Z}^{n}$ of a lattice $L$ with $\parallel {\bf b}_{i} \parallel \leq X$ $\forall i$ and a search parameter $B$; it returns a \em{reduced} basis generating a \em{sub-lattice} $L' \subseteq L$ such that if ${\bf v} \in L$ and $\parallel {\bf v} \parallel \leq B$ then $ {\bf v} \in L'$.

\noindent Our algorithm has the following complexity bounds for various input:

\vspace{.1in}

\begin{center}
\begin{tabular}{|c|c|}
\hline
No $P_i$ &\raisebox{11pt}{~}$\mathcal{O}(d^2(n+d^2)(d+\log{B})[\log{X}+n(d+\log{B})])$  \\

\hline

No restriction on $P_i$ &\raisebox{11pt}{~}$\mathcal{O}(d^4(d+\log{B})[\log{X}+d(d+\log{B})])$  \\

\hline

Many $P_i$ large w.r.t. $B$ &\raisebox{11pt}{~} $\mathcal{O}(dr^3(r+\log{B})[\log{X}+d(r+\log{B})])$\\

\hline
\end{tabular}
\end{center}

\vspace{.1in}

\noindent These complexity bounds have several distinct parameters, so a comparison with other algorithms is a bit subtle.  The most significant parameter to explore is $B$, the search parameter.  If one selects $B = X$ then our algorithm will return a reduced basis of $L'=L$ in $\mathcal{O}(d^2n(n+d^2)[d^2+ \log^2{X}])$.  This is an interesting result because our algorithm, like the original LLL and the ${\rm L}^2$ algorithms, uses switches and size-reductions of the vectors to arrive at a reduced basis.  The fact that we return a reduced basis with a complexity so similar to ${\rm L}^2$ implies that there are alternative orderings on the switches which lead to similar performance.

\noindent When using a smaller value of $B$ than $X$ the algorithm will return either:
\begin{itemize}
\item A reduced basis of a sub-lattice $L'$ which contains all vectors of norm $\leq B$.  This sub-lattice may be different than the sub-lattice, $L''$, generated by all vectors of norm $\leq B$, and we do have $L'' \subseteq L' \subseteq L$.  Also, because the basis of $L'$ is reduced, we have an approximation of the shortest non-zero vector of $L$. 

\item The empty set, in which case the algorithm has proved that no non-zero vector of norm $\leq B$ exists in $L$.

\end{itemize}

\noindent We offer the following complexity comparison with ${\rm L}^2$~\cite{fpLLL} for some values of $B$ on square input lattices (with $P_j$'s).  When a column has a non-zero $P_j$ we can reduce the $x_{i,j}$ modulo $P_j$.  Thus, without loss of generality, we may assume that $P_j$ is the largest element in its column.  Note that $r=d-N$.

\begin{center} 
\begin{tabular}{|c|c|}
\hline
${\rm L}^2$ & \raisebox{11pt}{~}$\mathcal{O}(d^6\log{X}+d^5\log^2{X})$\\
\hline
\hline
$B=\mathcal{O}(X)$ & \raisebox{11pt}{~}$\mathcal{O}(d^7+d^5\log^2{X})$\\
\hline
$B=\mathcal{O}(X^{1/d})$ &\raisebox{11pt}{~} $\mathcal{O}(d^2r^5+r^3\log^2{X})$\\
\hline
$B=2^{\mathcal{O}(d)}$ &\raisebox{11pt}{~} $\mathcal{O}(d^4r^3+d^2r^3\log{X})$ \\
\hline
\end{tabular}
\end{center}

\noindent It should be noted that~\cite{fpLLL} explores running times of ${\rm L}^2$ on knapsack lattices with $N=1$ (such lattice bases are used in~\cite{goldstein}).  In this case, ${\rm L}^2$ will have complexity $\mathcal{O}(d^5\log{X}+d^4\log^2{X})$.

\noindent \textbf{Our approach:}  We reduce the basis gradually, using many separate calls to another lattice reduction algorithm.  To get the above complexity results we chose H-LLL~\cite{HLLL} but there are many suitable lattice reduction algorithms we could use instead such as~\cite{kaltofen,LLL,fpLLL,schnorr2}.  For more details on why we made this decision see the discussion in section~\ref{compsect}.  

There are three important features to our approach.  First, we approach the problem column by column.  Beginning with the $r\times r$ identity and with each iteration of the algorithm we expand our scope to include one more column of the $x_{i,j}$.  Next, within each column iteration, we reduce the new entries bit by bit, starting with a reduction using only the most significant bits, then gradual including more and more bits of data.  Third, we allow for the removal of vectors which have become too large.  This allows us to always work on small entries, but restricts us to a sub-lattice.  

The proof of the algorithm's complexity is essentially a study of two quantities, the product of the Gram-Schmidt lengths of the current vectors which we call the active determinant and an energy function which we call progress.  We amortize all of the lattice reduction costs using progress, and we bound the number of iterations and number of vectors using the active determinant.  Neither of these quantities is impacted by the choice of lattice reduction algorithm.

\noindent \textbf{Applications of the algorithm:} As evidence for the usefulness of this new approach we show two new complexity results based on applications of the main algorithm.  The first result is a new complexity for the classical problem of factoring polynomials in $\mathbb{Z}[x]$.  If the polynomial has degree $N$, coefficients smaller than $\log(A)$, and when reduced modulo a prime $p$ has $r$ irreducible factors then we prove a complexity of $\mathcal{O}(N^3r^4+N^2r^4\log{A})$ for the lattice reduction costs using classical arithmetic.  One must also add the cost of multi-factor Hensel lifting which is $\mathcal{O}(N^6+N^4 \log^2{A})$ ignoring the small terms $\log(r)$ and $\log^2{p}$ (see~\cite{mca} for details).  This is the first improvement over the Sch\"onhage bound given in 1984~\cite{Schonhage} of $\mathcal{O}(N^8+N^5\log^3{A})$.  

The second new complexity result comes in the problem of reconstructing a minimal polynomial from a complex approximation of the algebraic number.  In this application we know $\mathcal{O}(d^2+d\log{H})$ bits of an approximation of some complex root of an unknown polynomial $h(x)$ with degree $d$ and with maximal coefficient of absolute value $\leq H$.  Then our algorithm can be used to find the coefficients of $h(x)$ in $\mathcal{O}(d^7 + d^5 \log^2{H})$ CPU operations.

Other problems of common interest which might be impacted by our algorithm include integer relation finding (where $N = 1$) and simultaneous Diophantine approximation of several real numbers \cite{hanrot,bright} (where $r = 1$).  

\noindent \textbf{Notations:} All costs are given for the bit-complexity model.  A standard row vector will be denoted ${\bf v}$, ${\bf v}[i]$ represents the $i^{\textrm{th}}$ entry of ${\bf v}$, ${\bf v}[i, \ldots, j]$ a vector consisting of all entries of ${\bf v}$ from the $i^{\textrm{th}}$ entry to the $j^{\textrm{th}}$ entry, and ${\bf v}[-1]$ the final entry of ${\bf v}$.  Also we will use $\| {\bf w} \|_{\infty}$ as the max-norm or the largest absolute value of an entry in the vector ${\bf w}$, $\parallel {\bf w} \parallel :=\sqrt{ \sum ({\bf w}[i])^2}$ which we call the norm of ${\bf w}$, and ${\bf w}^{T}$ as the transpose of ${\bf w}$.  The scalar product will be denoted ${\bf v} \cdot {\bf w}:= \sum {\bf v}[i] \cdot {\bf w}[i]$.  For a matrix $M$ we will use $M[1,\ldots, k]$ to denote the first $k$ columns of $M$.  The $n$ by $n$ identity matrix will be denoted $I_{n \times n}$.  For a real number $x$ we use $\lceil x \rceil$ and $\lfloor x \rfloor$ to denote the closest integer $\geq x$ and $\leq x$ respectively.

\noindent \textbf{Road map:} In section~\ref{background} we give a brief introduction to lattice reduction algorithms.  In section~\ref{mainalgsec} we present the central algorithm of the paper and prove its correctness.  In section~\ref{adsect} we prove several important features by studying quasi-invariants we call the active determinant and progress.  In this section we treat lattice reduction as a black-box algorithm.  In section~\ref{compsect} we prove the overall complexity and other important claims about the new algorithm by fixing a choice for a standard lattice reduction algorithm.  In section~\ref{newcomplexities} we offer new complexity results for factoring polynomials in $\mathbb{Z}[x]$ and algebraic number reconstruction.

\section{Background on lattice reduction}\label{background}

\noindent The purpose of this section is to present some facts from~\cite{LLL} that
will be needed throughout the paper.  For a more general treatment of lattice reduction see~\cite{lovasz}.

A lattice, $L$, is a discrete subset of $\mathbb{R}^n$ that is also a $\mathbb{Z}$-module.
Let ${\bf b}_1,\ldots,{\bf b}_d \in L$ be a basis of $L$ and denote ${{\bf b}_1^*},\ldots,{{\bf b}_d^*} \in \mathbb{R}^n$ as the 
Gram-Schmidt orthogonalization over $\mathbb{R}$ of ${{\bf b}_ 1},\ldots,{{\bf b}_ d}$.  Let $\delta \in (1/4, 1]$ and $\eta \in [1/2, \sqrt{\delta})$.  
Let $l_i = \log_{1/\delta}{\parallel {\bf b}_{i}^* \parallel}^2$,
 and denote $\mu_{i,j}=\frac{{\bf b}_{i} \cdot {\bf b}_ j^*}{{\bf b}_ j^* 
\cdot {\bf b}_ j^*}$. Note that ${\bf b}_{i}, {\bf b}_{i}^*, l_i, \mu_{i,j}$ will change throughout the algorithm sketched below.

\begin{definition}\label{reducedbasis}
${{\bf b}_ 1}, \ldots, {{\bf b}_ d}$ is \emph{LLL-reduced}
if ${\parallel {\bf b}_{i}^* \parallel}^2 \leq \frac{1}{\delta-\mu_{i+1,i}^2}{\parallel {{\bf b}_{i+1}^*} \parallel}^2$ for $1 \leq i < d$ and $|\mu_{i,j}| \leq \eta$ for $1 \leq j < i \leq d$.
\end{definition}

In the original paper the values for $(\delta, \eta)$ were chosen as $(3/4, 1/2)$ so that $\frac{1}{\delta-\eta^2}$ would simply be 2.

\begin{alg}[Rough sketch of LLL-type algorithms]\label{LLL} \mbox{} \\
\noindent \emph{Input:} A basis ${{\bf b}_ 1},\ldots,{{\bf b}_ d}$ of a lattice $L$. \\
\noindent \emph{Output:} An LLL-reduced basis of $L$.
\begin{enumerate}[A -]
\item $\kappa:= 2$

\item \textbf{while} $\kappa \leq d$ \textbf{do:}
\begin{enumerate}[1 -]
\item{\em{(Gram-Schmidt over $\mathbb{Z}$)}}.\label{subtract} By subtracting suitable $\mathbb{Z}$-linear 
combinations of ${{\bf b}_ 1},\ldots,{{\bf b}_ {\kappa-1}}$ from ${{\bf b}_{\kappa}}$ make sure that $| \mu_{i,\kappa} | \leq \eta$ for $i<\kappa$.

\item{\em{(LLL Switch)}}.\label{swap} If interchanging ${{\bf b}_{\kappa-1}}$ and ${{\bf b}_{\kappa}}$ will 
decrease $l_{\kappa-1}$ by at least 1 then do so.

\item{\em{(Repeat)}}.  If not switched $\kappa:=\kappa+1$, if switched $\kappa = \textrm{max}(\kappa-1,2)$.

\end{enumerate}

\end{enumerate}
\end{alg}
That the above algorithm terminates, and that the output is
LLL-reduced was shown in~\cite{LLL}.  Step \ref{subtract} has no effect on the $l_i$.  In step~\ref{swap} the only $l_i$ that change
are $l_{\kappa-1}$ and $l_{\kappa}$.  The following lemmas present some standard facts which we will need.

\begin{lemma}\label{maxlll}  An LLL switch can not increase $\max(l_1, \ldots, l_d)$,
nor can it decrease $\min(l_1, \ldots, l_d)$.
\end{lemma}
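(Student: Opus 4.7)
The plan is to reduce the claim to just the two indices $\kappa-1$ and $\kappa$. As noted just before the lemma, all other $\parallel {\bf b}_i^* \parallel^2$ (and hence all other $l_i$) are unchanged by a switch. So writing $a = \parallel {\bf b}_{\kappa-1}^* \parallel^2$ and $b = \parallel {\bf b}_\kappa^* \parallel^2$ before the switch, and $a', b'$ for the corresponding values after, it suffices to show $\max(a',b') \leq \max(a,b)$ and $\min(a',b') \geq \min(a,b)$; the lemma then follows because $l_i = \log_{1/\delta} \parallel {\bf b}_i^* \parallel^2$ is monotone in $\parallel {\bf b}_i^* \parallel^2$ (here $1/\delta > 1$).

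First I would establish two standard algebraic identities. A direct Gram--Schmidt calculation shows that after swapping, the projection of the new basis vector at position $\kappa-1$ onto the orthogonal complement of $\mathrm{span}({\bf b}_1, \ldots, {\bf b}_{\kappa-2})$ equals $\mu\, {\bf b}_{\kappa-1}^* + {\bf b}_\kappa^*$, where $\mu = \mu_{\kappa,\kappa-1}$, so $a' = \mu^2 a + b$. Also, the product $a' b'$ equals the squared volume of the rank-$2$ lattice obtained by projecting $\{{\bf b}_{\kappa-1}, {\bf b}_\kappa\}$ orthogonally to that same subspace; since the switch merely permutes a basis of this sublattice, the volume is invariant, giving $a' b' = a b$, and hence $b' = a b / a'$.

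Next I would exploit the switch condition. By construction, the switch happens precisely when $l_{\kappa-1}$ drops by at least $1$, which is equivalent to $a' \leq \delta a$, i.e.\ $\mu^2 a + b \leq \delta a$, i.e.\ $b \leq (\delta - \mu^2) a$. Since $\delta \leq 1$, this also forces $b \leq (1 - \mu^2) a \leq a$, so in the old basis $\max(a,b) = a$ and $\min(a,b) = b$.

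The four desired inequalities now drop out immediately: $a' \leq \delta a \leq a$; from $a' = \mu^2 a + b \geq b$ we get $b' = ab/a' \leq a$, and $a'$ itself already satisfies $a' \geq b$; finally, $a' \leq \mu^2 a + (1-\mu^2) a = a$ gives $b' = ab/a' \geq b$. Applying the monotone function $\log_{1/\delta}$ converts these four estimates into the claimed inequalities on $\max(l_1, \ldots, l_d)$ and $\min(l_1, \ldots, l_d)$. There is no real obstacle here; the one subtlety worth flagging is that the switch condition itself forces $b \leq a$, so $\max$ and $\min$ are automatically associated with the old entries at positions $\kappa-1$ and $\kappa$ respectively, which is what makes the case analysis collapse into the short chain above.
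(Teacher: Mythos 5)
Your proof is correct. The paper itself states this lemma without proof as a ``standard fact'' from the LLL literature, so there is no paper argument to compare against; the argument you give---reducing to indices $\kappa-1,\kappa$, using $a' = \mu^2 a + b$ and the determinant invariance $a'b' = ab$, and invoking the switch condition $a' \leq \delta a$ (equivalently $b \leq (\delta - \mu^2)a$) to force $b \leq a$ and then read off the four inequalities $b \leq a', b' \leq a$---is exactly the standard one. Minor notes: the inequality $b' \geq b$ already follows from $a' \leq \delta a \leq a$ together with $b' = ab/a'$, so your separate derivation $a' \leq \mu^2 a + (1-\mu^2)a = a$ is redundant; and the reformulation ``decrease $l_{\kappa-1}$ by at least $1$'' coincides with the Lov\'asz condition $a' \leq \delta a$ precisely because $l_i = \log_{1/\delta}\|{\bf b}_i^*\|^2$, which you correctly exploit.
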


\begin{lemma}\label{Bbound} If ${\parallel {\bf b}_d^* \parallel} > B$ then any vector in $L$ with norm 
$\leq B$ is
a $\mathbb{Z}$-linear combination of ${{\bf b}_ 1}, \ldots, {{\bf b}_ {d-1}}$.
\end{lemma}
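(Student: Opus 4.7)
The plan is to use the standard orthogonal decomposition of a lattice vector against the Gram-Schmidt basis, then bound its norm from below by the projection onto ${\bf b}_d^*$.

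First, take any ${\bf v} \in L$ with $\|{\bf v}\| \leq B$ and write it uniquely as ${\bf v} = \sum_{i=1}^d c_i {\bf b}_i$ with $c_i \in \mathbb{Z}$. The goal is to show $c_d = 0$, for then ${\bf v}$ lies in the $\mathbb{Z}$-span of ${\bf b}_1, \ldots, {\bf b}_{d-1}$ as claimed.

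Next, I would re-expand ${\bf v}$ in the Gram-Schmidt basis ${\bf b}_1^*, \ldots, {\bf b}_d^*$. Since ${\bf b}_i = {\bf b}_i^* + \sum_{j<i} \mu_{i,j} {\bf b}_j^*$, the coefficient of ${\bf b}_d^*$ in ${\bf v}$ is simply $c_d$ (no ${\bf b}_j$ for $j<d$ contributes to the ${\bf b}_d^*$ component). Writing ${\bf v} = \sum_{i=1}^d a_i {\bf b}_i^*$ with $a_d = c_d$, orthogonality of the ${\bf b}_i^*$ gives
\[
\|{\bf v}\|^2 \;=\; \sum_{i=1}^d a_i^2 \, \|{\bf b}_i^*\|^2 \;\geq\; c_d^2 \, \|{\bf b}_d^*\|^2.
\]

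Finally, I would conclude by contradiction: if $c_d \neq 0$ then $|c_d| \geq 1$ (as $c_d \in \mathbb{Z}$), so $\|{\bf v}\| \geq \|{\bf b}_d^*\| > B$, contradicting the hypothesis $\|{\bf v}\| \leq B$. Hence $c_d = 0$, proving the lemma. There is no real obstacle here; the only subtle point is noting that $c_d$ equals the Gram-Schmidt coefficient $a_d$ exactly because the triangular change-of-basis between ${\bf b}_i$ and ${\bf b}_i^*$ leaves the top coordinate untouched.
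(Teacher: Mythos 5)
Your proof is correct and matches the standard argument: the paper itself does not spell out a proof but defers to the derivation of Eq.\ (1.11) in the LLL paper, and your projection-onto-$\mathbf{b}_d^*$ argument is exactly that derivation. The key observation that the triangular change of basis forces $a_d = c_d$, combined with orthogonality of the $\mathbf{b}_i^*$ and integrality of $c_d$, is precisely the intended reasoning, and your note that the lemma needs no LLL-reducedness hypothesis is consistent with the paper's remark to that effect.
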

In other words, if the current basis of the lattice is ${{\bf b}_ 1},\ldots,{{\bf b}_ d}$ and if the last vector
has sufficiently large G-S length then, provided the user is only interested in elements of $L$ with norm $\leq B$, the last basis element can be removed.

Lemma~\ref{Bbound} follows from the proof of~\cite[Eq.\ (1.11)]{LLL}, and is true
regardless of whether ${{\bf b}_ 1},\ldots,{{\bf b}_ d}$ is LLL-reduced or not.
However, if one chooses an arbitrary basis ${{\bf b}_ 1},\ldots,{{\bf b}_ d}$ of some lattice $L$, then it is unlikely that the last vector has large G-S length (after all, $\parallel {{\bf b}_ d}^* \parallel$
is the norm of ${{\bf b}_ d}$ reduced modulo ${{\bf b}_ 1},\ldots,{{\bf b}_ {d-1}}$ over $\mathbb{R}$).
The effect of LLL reduction is to move G-S length towards later
vectors.

\section{Main algorithm}\label{mainalgsec}

In this section we present the central algorithm of the paper and a proof of its correctness.  Our algorithm is a kind of wrapper for other standard lattice reduction algorithms.  We try to present it as independently as possible of the choice of lattice reduction algorithm.  In order to be general we must first outline the features that we require of the chosen lattice reduction algorithm.  Our first requirement is that the output satisfy the following slightly weakened version of LLL-reduction.

\begin{definition}\label{alphareduced}
Let $L \subseteq \mathbb{R}^n$ be a lattice and ${{\bf b}_ 1},\ldots, {{\bf b}_ s} \in L$ be $\mathbb{R}$-linearly independent.  We call ${{\bf b}_ 1}, \ldots, {{\bf b}_ s}$ an $\alpha$-reduced basis of $L$ if \ref{one},\ref{two}, and \ref{threea} hold, and an $(\alpha,B)$-reduced sequence (basis of a sub-lattice) if \ref{one},\ref{two}, and \ref{threeb} hold:

\begin{enumerate}
\item \label{one} $\parallel {\bf b}_{i}^* \parallel \leq \alpha \parallel {\bf b}_{i+1}^* \parallel$ for $i=1 \ldots s-1$.

\item \label{two} $\parallel {\bf b}_{i}^* \parallel \leq \parallel {\bf b}_{i} \parallel \leq \alpha^{i-1} \parallel {\bf b}_{i}^* \parallel$ for $i=1 \ldots s$.

\item \begin{enumerate}

\item \label{threea} $L=\mathbb{Z}{{\bf b}_ 1} +\cdots + \mathbb{Z}{{\bf b}_ s}$.

\item \label{threeb} $\parallel {\bf b}_s^* \parallel \leq B$ and for every ${\bf v} \in L$ with $\parallel {\bf v} \parallel \leq B$ we have ${\bf v} \in \mathbb{Z}{{\bf b}_ 1} + \cdots + \mathbb{Z}{{\bf b}_ s}$.
\end{enumerate}

\end{enumerate}

\end{definition}

The original LLL algorithm from~\cite{LLL} returns output with $\alpha=\sqrt{2}$, ${\rm L}^2$ from~\cite{fpLLL} with $\alpha=\sqrt{ \frac{1}{\delta-\eta^2}}$ for appropriate choices of $(\delta, \eta)$, and H-LLL from~\cite{HLLL} reduced with $\alpha= \frac{ \theta \eta + \sqrt{(1+\theta^2)\delta-\eta^2}}{\delta - \eta^2}$ for appropriate $(\delta, \eta, \theta)$.  We may now also make a useful observation about an $(\alpha,B)$-reduced sequence.

\begin{lemma}\label{smallbreduced}
If the vectors ${{\bf b}_ 1}, \ldots, {{\bf b}_ s}$ form an $(\alpha,B)$-reduced sequence and we let ${\bf b}_1^*, \ldots, {\bf b}_s^*$ represent the GSO, then the following properties are true:

\begin{itemize}
\item $\parallel {\bf b}_{i}^* \parallel \leq \alpha^{s-i} B$ for all $i$.

\item $\parallel {\bf b}_{i} \parallel \leq \alpha^{s-1} B$ for all $i$.

\end{itemize}

\end{lemma}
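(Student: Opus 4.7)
The plan is to observe that both bullets follow directly by chaining the inequalities packaged in Definition~\ref{alphareduced} for an $(\alpha,B)$-reduced sequence. The only non-trivial ingredient in the definition is the anchor $\|\mathbf{b}_s^*\| \leq B$ supplied by condition~\ref{threeb}; everything else is a routine telescoping argument.

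For the first bullet, I would fix $i$ and iterate condition~\ref{one} backwards from index $s$ to index $i$, obtaining
\[
\|\mathbf{b}_i^*\| \;\leq\; \alpha\,\|\mathbf{b}_{i+1}^*\| \;\leq\; \alpha^2\,\|\mathbf{b}_{i+2}^*\| \;\leq\; \cdots \;\leq\; \alpha^{s-i}\,\|\mathbf{b}_s^*\|,
\]
and then invoke $\|\mathbf{b}_s^*\| \leq B$ from condition~\ref{threeb} to conclude $\|\mathbf{b}_i^*\| \leq \alpha^{s-i} B$. For $i = s$ the chain is empty and the inequality is just condition~\ref{threeb} itself, so no separate base case is needed.

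For the second bullet, I would feed the first bullet into the right-hand inequality of condition~\ref{two}:
\[
\|\mathbf{b}_i\| \;\leq\; \alpha^{i-1}\,\|\mathbf{b}_i^*\| \;\leq\; \alpha^{i-1}\cdot\alpha^{s-i} B \;=\; \alpha^{s-1} B,
\]
and the exponent telescopes to the uniform bound $s-1$ independent of $i$, as claimed.

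There is no serious obstacle here: the lemma is essentially a packaging lemma extracting a uniform norm bound from the local comparison inequalities in Definition~\ref{alphareduced}, and the whole argument is two lines of telescoping followed by one substitution.
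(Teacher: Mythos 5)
Your proof is correct and takes the only reasonable approach: chain condition~\ref{one} of Definition~\ref{alphareduced} down to the anchor $\|\mathbf{b}_s^*\|\leq B$ from condition~\ref{threeb}, then feed the result into condition~\ref{two}. This matches the intended (straightforward) argument.
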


We use the concept of $\alpha$-reduction as a means of making proofs which are largely independent of which lattice reduction algorithm a user might choose.  For a basis which is $\alpha$-reduced, a small value of $\alpha$ implies a strong reduction.  In our algorithm we use the variable $\alpha$ as the worst-case guarantee of reduction quality.  We make our proofs (specifically Lemma~\ref{ADincrease} and Theorem~\ref{numloops}) assuming an $\alpha \geq \sqrt{4/3}$.  This value is chosen because~\cite{LLL,fpLLL,HLLL} cannot guarantee a stronger reduction.  An $(\alpha,B)$-reduced bases is typically made from an $\alpha$-reduced basis by removing trailing vectors with large G-S length.  The introduction of $(\alpha,B)$-reduction does not require creating new lattice reduction algorithms, just the minor adjustment of detecting and removing vectors above a given G-S length.

\begin{alg}\label{lllwithremovals}

\textbf{LLL\_with\_removals}

\textbf{Input:} ${{\bf b}_ 1}, \ldots, {{\bf b}_ s} \in \mathbb{R}^n $ and $B \in \mathbb{R}$.

\textbf{Output:} ${{\bf b'}_1}, \ldots, {{\bf b'}_{s'}} \in \mathbb{R}^n$ $(\alpha,B)$-reduced, $s' \leq s$.

\textbf{Procedure:}  Use any lattice reduction procedure which returns an $\alpha$-reduced basis and follows Assumption~\ref{switchreqs}.  However, when it is discovered that the final vector has G-S length provably $>B$ remove that final vector (deal with it no further).
\end{alg}

\begin{ass}\label{switchreqs}
The lattice reduction algorithm chosen for LLL\_with\_removals must use switches of consecutive vectors during its reduction process.  These switches must have the following properties:

\begin{enumerate}
\item There exists a number $\gamma > 1$ such that every switch of vectors ${\bf b}_{i}$ and ${{\bf b}_ {i+1}}$ increases $\parallel {\bf b}_{i+1}^* \parallel^2$ by a factor provably $\geq \gamma$.

\item The quantity $\textrm{max}\{ \parallel {\bf b}_{i}^* \parallel, \parallel {\bf b}_{i+1}^* \parallel \}$ cannot be increased by switching ${\bf b}_{i}$ and ${{\bf b}_ {i+1}}$.

\item No steps other than switches can affect G-S norms $\parallel {\bf b}_1^* \parallel, \ldots, \parallel {\bf b}_s^* \parallel $.
\end{enumerate}
\end{ass}

Assumption~\ref{switchreqs} is not very strong as~\cite{LLL,fpLLL,HLLL,schnorr2,arne} and the sketch in Algorithm~\ref{LLL} all conform to these assumptions.  We do not allow for the extreme case where $\gamma=1$, although running times have been studied in~\cite{akhavi,flags}.  It should also be noted that in the floating point lattice reduction algorithms $\parallel {\bf b}_s ^* \parallel$ is only known approximately.  In this case one must only remove vectors whose approximate G-S length is sufficiently large to ensure that the exact G-S length is $\geq B$.

The format of the input matrices was given in section~\ref{intro}.  A search parameter $B$ is given to bound the norm of the target vectors.  The algorithm performs its best when $B$ is small compared to the bit-length of the entries in the input matrix, although $B$ need not be small for the algorithm to work.

\begin{definition}\label{pj_def} We say the $P_j$ are {\em large enough} if:
\begin{equation}\label{pj}
|P_j| \geq 2\alpha^{4r+4k+2}B^2 \textrm{~for all but $k=\mathcal{O}(r)$ values of $j$}.
\end{equation}
\end{definition}

\noindent Note that if $N=\mathcal{O}(r)$ then the $P_j$ are trivially large enough.  However, for applications where $N$ is potentially much larger than $r$ this becomes a non-trivial condition.  In this case having $B$ close to $X$ means that the $P_j$'s are not large enough.

\medskip

In the following algorithm we will gradually reduce the input basis.  This will be done one column at a time, similar to the experiments in~\cite{Belabas,bright}.  The current basis vectors are denoted ${\bf b}_{i}$ and we will use $M$ to represent the matrix whose rows are the ${\bf b}_{i}$.  We will use the notation ${{\bf x}_j}$ to represent the column vector $(x_{1,j}, \ldots, x_{r,j})^{T}$.

The matrix $M$ will begin as $I_{r \times r}$, and we will adjoin ${\bf x}_1$ and a new row $({\bf 0}, P_1)$ if appropriate.  Each time we add a column ${\bf x}_j$ we will need to calculate the effects of prior lattice reductions on the new ${\bf x}_j$.  We use ${\bf y}_j$ to represent a new column of entries which will be adjoined to $M$.  In fact ${\bf y}_j = M[1,\ldots, r] \cdot {\bf x}_j$.  Before adjoining the entries we also scale them by a power of 2, to have smaller absolute values.  This keeps the entries in $M$ at a uniform absolute value.  The central loop of the algorithm is the process of gradually using more and more bits of ${\bf y}_j$ until every entry in $M$ is again an integer.  No rounding is performed: we use rational arithmetic on the last column of each row.  Throughout the algorithm the number of rows of $M$ will be changing.  We let $s$ be the current number of rows of $M$.  If~\eqref{pj} is satisfied for some $k=\mathcal{O}(r)$ then we can actually bound $s$ by $2r+2k+1$.  We use $c$ as an apriori upper bound on $s$, either $c:=2r+2k+1$ or $c:=r+N$.  The algorithm has better performance when $c$ is small.  We let $L$ represent the lattice generated by the rows of $A$.

\begin{alg}\label{mainalg} 
{\rm \textbf{Gradual\_LLL}}

\textbf{Input:} A search parameter, $B \geq \sqrt{5} \in \mathbb{Q}$, an integer knapsack-type matrix, $A$, and an $\alpha \geq \sqrt{4/3}$.

\textbf{Output:} An $(\alpha,B)$-reduced basis ${{\bf b}_ 1}, \ldots, {{\bf b}_ s}$ of a sub-lattice $L'$ in $L$ with the property that if ${\bf v} \in L$ and ${\parallel {\bf v} \parallel} \leq B$ then ${\bf v} \in L'$.
\end{alg}

\textbf{The Main Algorithm:}

\begin{enumerate}[1 -]
\item\label{setc} \textbf{if}~\eqref{pj} holds set $c:=\textrm{min}(2r+2k+1,r+N)$ 

\item\label{init} $s:=r; M := I_{r \times r}$

\item\label{clmn} \textbf{for } $j = 1 \ldots N$ \textbf{ do}: 
\begin{enumerate}[a -]

\item\label{newent} ${\bf y}_j:= M[1,\ldots,r] \cdot {\bf x}_j$;  $\ell :=\lfloor \log_2{(\textrm{max}\{|P_j|, \|{\bf y}_j\|_{\infty}, 2 \})} \rfloor$

\item\label{adjoin} $M := \left[ \begin{array}{c|c} 0 & P_j/2^{\ell} \\ \hline M & {\bf y}_j/{2^{\ell}} \\ \end{array} \right]$; \textbf{if }$P_j \neq 0$\textbf{ then }$s := s+1$ \textbf{else} remove zero row 

\item\label{mainloop}\textbf{while } $(\ell \neq 0)$ \textbf{ do:}

\begin{enumerate}[i -]

\item\label{newprec} ${\bf y}_j := 2^{\ell} \cdot M \cdot [ 0,  \cdots, 0,1]^{T}$; $\ell := \textrm{max} \{ 0, \lceil \log_2 {( \frac{\|{\bf y}_j\|_{\infty}}{{\alpha}^{2c}B^2} )} \rceil \}$

\item\label{scale} $M:= \left[ \begin{array}{c|c} M[1, \ldots, r+j-1] &  {{\bf y}_j}/{2^{\ell}} \end{array} \right]$

\item\label{singlelll} Call $\textrm{LLL\_with\_removals}$ on $M$ and set $M$ to output; adjust $s$

\end{enumerate}

\end{enumerate}

\item \textbf{return} $M$

\end{enumerate}

First we will prove the correctness of the algorithm.  We need to show that the Gram-Schmidt lengths are never decreased by scaling the final entry or adding a new entry.

\begin{lemma}\label{scalenondecrease}
Let ${\bf b}_1, \ldots, {\bf b}_s \in \mathbb{R}^n$ be the basis of a lattice and ${\bf b}_1^*, \ldots, {\bf b}_s^*$ its GSO.  Let $\sigma:\mathbb{R}^{n} \to \mathbb{R}^{n}$ scale up the last entry by some 
factor $\beta >1$, then we have $\parallel {\bf b}_i^* \parallel \leq \parallel {\sigma({\bf b}_i)}^* \parallel$.  In other words, scaling the final entry of each vector by the same scalar $\beta > 1$ cannot decrease $\|{\bf b}_i^*\|$ for any $i$.  
\end{lemma}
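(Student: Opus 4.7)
The plan is to express $\|{\bf b}_i^*\|^2$ as a ratio of Gram determinants and then track what the scaling $\sigma$ does to those determinants. Writing $B_k$ for the $k\times n$ matrix with rows ${\bf b}_1,\ldots,{\bf b}_k$ and $G_k=B_kB_k^T$, we have the standard identity $\|{\bf b}_i^*\|^2=\det(G_i)/\det(G_{i-1})$ (with $\det(G_0):=1$). The scaled matrix is $\sigma(B_k)=B_kD$ where $D$ is the diagonal matrix $\mathrm{diag}(1,\dots,1,\beta)$, so the new Gram matrix is $G'_k=B_kD^2B_k^T$. Since $D^2=I+(\beta^2-1)e_ne_n^T$, I get the clean rank-one update
\[
G'_k=G_k+(\beta^2-1)\,{\bf c}_k{\bf c}_k^T,
\]
where ${\bf c}_k=B_ke_n$ is the column of final entries of ${\bf b}_1,\dots,{\bf b}_k$.

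Next I would apply the matrix determinant lemma to this rank-one update to obtain
\[
\det(G'_k)=\det(G_k)\bigl(1+(\beta^2-1)g_k\bigr),\qquad g_k:={\bf c}_k^TG_k^{-1}{\bf c}_k,
\]
so that $\|\sigma({\bf b}_i)^*\|^2/\|{\bf b}_i^*\|^2=(1+(\beta^2-1)g_i)/(1+(\beta^2-1)g_{i-1})$. Since $\beta>1$, it remains to show $g_i\ge g_{i-1}$, at which point the ratio is $\ge 1$ and the lemma follows.

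The geometric interpretation makes the monotonicity of $g_k$ transparent, and this is the one step that requires real care. I would verify that $P_k:=B_k^TG_k^{-1}B_k$ is symmetric and idempotent, and that its image is exactly the row span $W_k=\mathrm{span}({\bf b}_1,\ldots,{\bf b}_k)$; thus $P_k$ is the orthogonal projector onto $W_k$. Consequently $g_k=e_n^TP_ke_n=\|P_ke_n\|^2$ is the squared length of the projection of $e_n$ onto $W_k$. Because $W_{k-1}\subseteq W_k$, projecting onto the larger subspace can only increase this length, so $g_{k-1}\le g_k$, which is exactly what was needed.

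The main obstacle is purely conceptual rather than computational: recognizing that a coordinate rescaling becomes a rank-one perturbation of the Gram matrix, and that the resulting correction factor has a projective interpretation that is manifestly monotone in $k$. Once this is set up, all the estimates are one-line consequences, and the proof handles all $i$ simultaneously without any case analysis on whether ${\bf b}_i$ has a nonzero last entry.
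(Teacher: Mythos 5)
Your proof is correct: the Gram-determinant identity $\|{\bf b}_i^*\|^2=\det(G_i)/\det(G_{i-1})$ is standard, the computation $G'_k=G_k+(\beta^2-1){\bf c}_k{\bf c}_k^T$ is right, the matrix determinant lemma applies because the ${\bf b}_i$ are independent so $G_k$ is invertible, and $B_k^TG_k^{-1}B_k$ is indeed the orthogonal projector onto the row span $W_k$, making $g_k=\|P_ke_n\|^2$ monotone in $k$ via $W_{k-1}\subseteq W_k$. Every step checks out, including the edge case $i=1$ where $g_0=0$.

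That said, the route is considerably heavier than the intended one. The paper's remark that its proofs of this lemma and of Lemma~\ref{extraentry} are ``quite similar'' points to a short variational argument that handles both at once. Recall $\|{\bf b}_i^*\|=\min_{{\bf u}\in V_{i-1}}\|{\bf b}_i-{\bf u}\|$ with $V_{i-1}=\mathrm{span}({\bf b}_1,\ldots,{\bf b}_{i-1})$. Since $\sigma$ is linear, $\mathrm{span}(\sigma({\bf b}_1),\ldots,\sigma({\bf b}_{i-1}))=\sigma(V_{i-1})$, hence
\[
\|\sigma({\bf b}_i)^*\|=\min_{{\bf u}\in V_{i-1}}\|\sigma({\bf b}_i-{\bf u})\|\ \ge\ \min_{{\bf u}\in V_{i-1}}\|{\bf b}_i-{\bf u}\|=\|{\bf b}_i^*\|,
\]
where the inequality is simply $\|\sigma({\bf v})\|\ge\|{\bf v}\|$ for all ${\bf v}$, because $\sigma$ multiplies one coordinate by $\beta>1$ and fixes the rest. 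The same two-line argument proves Lemma~\ref{extraentry} by replacing $\sigma$ with the linear embedding $\mathbb{R}^n\hookrightarrow\mathbb{R}^{n+1}$, which is also norm-nondecreasing. What your Gram-determinant approach buys is an exact multiplicative formula $\|\sigma({\bf b}_i)^*\|^2/\|{\bf b}_i^*\|^2=(1+(\beta^2-1)g_i)/(1+(\beta^2-1)g_{i-1})$, which is more information than the lemma asks for; the variational argument buys brevity and the uniform treatment of both lemmas that the paper appears to have in mind.
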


\begin{lemma}\label{extraentry}
Let ${\bf b}_1, \ldots, {\bf b}_s \in \mathbb{R}^{n}$ and let ${\bf b}_1^*, \ldots, {\bf b}_s^* \in \mathbb{R}^{n}$ be their GSO.  The act of adjoining an ${(n+1)}^{\textrm{st}}$ entry to each vector and re-evaluating the GSO cannot decrease $\parallel {\bf b}_i^* \parallel$ for any $i$ (assuming that the new entry is in $\mathbb{R}$). 
\end{lemma}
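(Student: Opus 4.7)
The plan is to use the variational characterization of the Gram-Schmidt norm: for any basis ${\bf b}_1,\ldots,{\bf b}_s$, the quantity $\|{\bf b}_i^*\|^2$ equals the squared distance from ${\bf b}_i$ to the $\mathbb{R}$-span $V_{i-1} := \mathrm{span}_\mathbb{R}({\bf b}_1,\ldots,{\bf b}_{i-1})$, i.e.\ $\|{\bf b}_i^*\|^2 = \min_{v \in V_{i-1}} \|{\bf b}_i - v\|^2$. This is immediate from the fact that ${\bf b}_i^* = {\bf b}_i - \sum_{j<i} \mu_{i,j}{\bf b}_j^*$ is the component of ${\bf b}_i$ orthogonal to $V_{i-1}$, which is the unique minimizer of the distance to ${\bf b}_i$.

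Next, denote by $\tilde{{\bf b}}_j = ({\bf b}_j,c_j) \in \mathbb{R}^{n+1}$ the extended vectors (where $c_j \in \mathbb{R}$ is the adjoined entry), and let $\pi : \mathbb{R}^{n+1} \to \mathbb{R}^n$ be the projection that drops the last coordinate. Since $\pi$ is $\mathbb{R}$-linear and $\pi(\tilde{{\bf b}}_j) = {\bf b}_j$, we have $\pi(\tilde{V}_{i-1}) = V_{i-1}$, where $\tilde{V}_{i-1} := \mathrm{span}_\mathbb{R}(\tilde{{\bf b}}_1,\ldots,\tilde{{\bf b}}_{i-1})$. Moreover, $\pi$ is a norm-nonincreasing map, so for any $\tilde{v} \in \tilde{V}_{i-1}$,
\[
\|\tilde{{\bf b}}_i - \tilde{v}\|^2 \;=\; \|{\bf b}_i - \pi(\tilde{v})\|^2 + (c_i - \tilde{v}[-1])^2 \;\geq\; \|{\bf b}_i - \pi(\tilde{v})\|^2 \;\geq\; \|{\bf b}_i^*\|^2,
\]
where the last inequality uses the variational characterization applied to $\pi(\tilde{v}) \in V_{i-1}$.

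Taking the infimum over $\tilde{v} \in \tilde{V}_{i-1}$ and again invoking the variational characterization (now for the extended basis) gives $\|\tilde{{\bf b}}_i^*\|^2 \geq \|{\bf b}_i^*\|^2$, which is the claim. There is no real obstacle here; the only subtle point is being explicit that the map $\pi$ surjects $\tilde{V}_{i-1}$ onto $V_{i-1}$, so that the infimum on the right-hand side is actually attained by some $v = \pi(\tilde{v})$ arising from a legitimate $\tilde{v}$, ensuring the inequalities chain in the correct direction. The same argument, verbatim, yields Lemma~\ref{scalenondecrease} by using the scaling map $\sigma$ in place of the embedding and noting that scaling up the last coordinate by $\beta > 1$ only increases the contribution of that coordinate to the distance.
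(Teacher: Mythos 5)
Your proof is correct, and the variational characterization $\|{\bf b}_i^*\|^2 = \min_{v \in V_{i-1}}\|{\bf b}_i - v\|^2$ is the natural tool; the chain of inequalities is clean and the computation $\|\tilde{{\bf b}}_i - \tilde v\|^2 = \|{\bf b}_i - \pi(\tilde v)\|^2 + (c_i - \tilde v[-1])^2$ is exactly the right one-line observation. One small correction to your aside: what the argument actually needs is only the inclusion $\pi(\tilde V_{i-1}) \subseteq V_{i-1}$ (so that $\pi(\tilde v)$ is an eligible competitor in the minimization defining $\|{\bf b}_i^*\|^2$), not surjectivity of $\pi$. You minimize over $\tilde V_{i-1}$ on the left and bound each individual term from below, so there is no need for the minimizer of $\min_{v\in V_{i-1}}\|{\bf b}_i - v\|$ to arise as some $\pi(\tilde v)$; surjectivity would matter only if you wanted to pin down when equality holds. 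Your concluding remark is apt: replacing the norm-nonincreasing projection $\pi$ with the norm-nondecreasing scaling $\sigma$ (for $\beta \geq 1$) gives Lemma~\ref{scalenondecrease} by the same scheme, consistent with the paper's comment that the two proofs are quite similar.
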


The proofs of these lemmas are quite similar and can be found in the appendix.  Now we are ready to prove the first theorem, asserting the correctness of algorithm~\ref{mainalg}'s output.

\begin{theorem}\label{taccuracy}
Algorithm~\ref{mainalg} correctly returns an $\alpha$-reduced basis of a sub-lattice, $L'$, in $L$ such that if ${\bf v} \in L$ and $\parallel {\bf v} \parallel \leq B$ then ${\bf v} \in L'$.
\end{theorem}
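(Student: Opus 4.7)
The plan is to establish two things separately: (a) the output forms an $\alpha$-reduced basis of the sub-lattice $L'$ it generates, and (b) every ${\bf v}\in L$ with $\|{\bf v}\|\le B$ belongs to $L'$. Part (a) is immediate: once the inner while loop finishes for column $N$ we have $\ell=0$, and the final call to LLL\_with\_removals produces an $(\alpha,B)$-reduced sequence which, by Definition~\ref{alphareduced}, is $\alpha$-reduced relative to its own span. The substance is (b), which I will prove by maintaining an invariant throughout the algorithm.

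To state the invariant, associate to the current rows ${\bf b}_1,\ldots,{\bf b}_s$ of $M$ (which has $r+k$ columns at that point) their lifts $\bar{\bf b}_1,\ldots,\bar{\bf b}_s\in\mathbb{Z}^{r+N}$: un-scale the last column by $2^\ell$, then append the remaining $N-k$ coordinates by applying the same $\mathbb{Z}$-linear row-combinations that produced $M$ to the full rows of $A$. Each $\bar{\bf b}_i$ lies in $L$; write $L^\ast$ for their $\mathbb{Z}$-span, and let $\pi_m:\mathbb{R}^{r+N}\to\mathbb{R}^{m}$ denote projection to the first $m$ coordinates. The invariant I propose is
\[
  \pi_{r+k}(L^\ast)\ \supseteq\ \pi_{r+k}\bigl(\{{\bf v}\in L:\|{\bf v}\|\le B\}\bigr).
\]
Initialization ($k=0$, $M=I_{r\times r}$) holds trivially because $\pi_r(L^\ast)=\mathbb{Z}^r$, and at termination ($k=N$) $\pi_{r+N}$ is the identity, which is exactly (b).

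It then suffices to check each elementary operation preserves the invariant. LLL switches and size-reductions preserve $L^\ast$; the rescaling of the last column (steps~\ref{newprec} and~\ref{scale}) changes only the representation of $M$, not the lifted lattice. A removal is justified by Lemmas~\ref{scalenondecrease} and~\ref{extraentry}: the un-scaled projected Gram--Schmidt length of the discarded ${\bf b}_s$ is at least its scaled counterpart and so still exceeds $B$, whence Lemma~\ref{Bbound} applied inside $\mathbb{R}^{r+k}$ shows that any projection of norm $\le B$---in particular $\pi_{r+k}({\bf v})$ for a short ${\bf v}\in L$, since projection cannot increase norm---remains in the $\mathbb{Z}$-span of the surviving rows.

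The main obstacle is step~\ref{adjoin}, where the projection dimension jumps from $r+k$ to $r+k+1$. Given a short ${\bf v}\in L$ and a witness ${\bf v}'\in L^\ast$ satisfying $\pi_{r+k}({\bf v}')=\pi_{r+k}({\bf v})$, I must exhibit ${\bf v}''\in L^{\ast\ast}:=L^\ast+\mathbb{Z}\,(\text{the $P_{k+1}$-row})$ with $\pi_{r+k+1}({\bf v}'')=\pi_{r+k+1}({\bf v})$. The argument hinges on the fact that ${\bf v}-{\bf v}'\in L$ vanishes in the first $r+k$ coordinates; since the data rows contribute the standard basis in the first $r$ columns, this forces ${\bf v}-{\bf v}'$ to be a $\mathbb{Z}$-combination of $P_i$-rows with $i>k$, whose only contribution to coordinate $r+k+1$ is through $P_{k+1}$. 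Hence ${\bf v}_{r+k+1}-{\bf v}'_{r+k+1}=c_{k+1}P_{k+1}$ for some $c_{k+1}\in\mathbb{Z}$, and ${\bf v}'':={\bf v}'+c_{k+1}\,(\text{the $P_{k+1}$-row})$ does the job (if $P_{k+1}=0$, the identity ${\bf v}''={\bf v}'$ suffices).
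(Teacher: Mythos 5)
Your proof is correct and follows the same underlying strategy as the paper --- removed vectors have G-S length exceeding $B$, so by Lemma~\ref{Bbound} no short vector is lost --- but you organize it around an explicit projection invariant $\pi_{r+k}(L^\ast)\supseteq\pi_{r+k}(\{{\bf v}\in L:\|{\bf v}\|\le B\})$, whereas the paper works with the full-dimensional lifts $\tilde{\bf b}_i\in\mathbb{Z}^{r+N}$ and keeps the induction implicit. Your framing has two consequences worth noting. First, because you compare $\|\pi_{r+k}({\bf v})\|\le\|{\bf v}\|\le B$ against the \emph{projected} G-S length of the discarded row, you only need Lemma~\ref{scalenondecrease} at the removal step; Lemma~\ref{extraentry} (which you cite anyway) is what the paper needs to push the G-S bound up to the full $(r+N)$-dimensional lift, and in your formulation it is superfluous. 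Second, by restricting $L^\ast$ to the lifts of the current rows of $M$ (excluding not-yet-adjoined $P_j$ rows), you must do honest work at step~\ref{adjoin}, and your argument there --- that ${\bf v}-{\bf v}'$ vanishing on the first $r+k$ coordinates forces it into the span of the $P_i$-rows with $i>k$, so only a $\mathbb{Z}P_{k+1}$ correction is needed at the new coordinate --- is a point the paper's terse proof glosses over entirely; with the paper's convention the adjunction is a no-op because the unadded $P_j$ rows are already implicitly in the working lattice. Both routes are sound; yours is more explicit about the induction and correctly identifies where the genuine case analysis lives.
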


\begin{proof}
When the algorithm terminates all entries are unscaled and each vector in the output is inside of $L$ as it is a linear combination of the original input vectors.  Thus the output is a basis of a sub-lattice $L'$ inside $L$.  Further, the algorithm terminates after a final call to step~\ref{singlelll} so returns an $(\alpha,B)$-reduced sequence.

Now we show that if ${\bf v} \in L$ and $\parallel {\bf v} \parallel \leq B$ then ${\bf v} \in L'$.  The removed vectors correspond to vectors $\tilde{ {\bf b}_i } \in L$ that, by lemmas~\ref{scalenondecrease} and~\ref{extraentry}, have G-S length at least as large as those of ${\bf b}_i$.  The claim then follows from lemmas~\ref{maxlll} and~\ref{Bbound}.
\end{proof}

\section{Two invariants of the algorithm}

Here we present the important proofs about the set-up of our algorithm.  All proofs in this section and the next allow for a black-box lattice reduction algorithm up to satisfying assumption~\ref{switchreqs}.  Each proof in this section involves the study of an invariant.  The two invariants which we use are:
\begin{itemize}
\item The Active Determinant, $\textrm{AD}(M)$, which is the product of the G-S lengths of the active vectors.  This remains constant under standard lattice reduction algorithms, and allows us to bound many features of the proofs.

\item The Progress, $PF= \sum_{i=1}^s (i-1) \log\parallel {\bf b}_{i}^* \parallel^2 + n_{\textrm{rm}} r \log(4\alpha^{4c}B^4)$, where $n_{\textrm{rm}}$ is the total number of vectors which have been removed so far. This function is an energy function which never decreases, and is increased by $\geq  1$ for each switch made in the lattice reduction algorithm.
\end{itemize}

\subsubsection*{A study of the active determinant}\label{adsect}

\begin{definition}
We call the active determinant of the vectors ${{\bf b}_ 1}, \ldots, {{\bf b}_ s}$ the product of their Gram-Schmidt lengths.  For notation we use, AD or $\textrm{AD}(\{ {\bf b}_{i} \} ):= \prod_{i=1}^s {\parallel {\bf b}_{i}^* \parallel}$.  For a matrix $M$ with the $i^{\textrm{th}}$ row denoted by $M[i]$, we use $\textrm{AD}$ or $\textrm{AD}(M)=\textrm{AD}(\{M[1], \ldots, M[s]\})$.
\end{definition}

For an $(\alpha,B)$-reduced sequence we can nicely bound the AD.  We have such a sequence after each execution of step~\ref{singlelll}.

\begin{lemma}\label{ADreduced}
If ${{\bf b}_ 1}, \ldots, {{\bf b}_ s}$ are an $(\alpha,B)$-reduced sequence then $\textrm{AD} \leq {({\alpha}^{s-1}B^2)}^{s/2}$.
\end{lemma}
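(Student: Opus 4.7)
The plan is to directly apply Lemma \ref{smallbreduced}, which gives the per-vector bound $\|{\bf b}_i^*\| \leq \alpha^{s-i} B$ for each $i$ in an $(\alpha, B)$-reduced sequence. The AD is just the product $\prod_{i=1}^s \|{\bf b}_i^*\|$, so I would multiply these bounds together.

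Concretely, I would write
\[
\mathrm{AD} = \prod_{i=1}^{s} \|{\bf b}_i^*\| \;\leq\; \prod_{i=1}^{s} \alpha^{s-i} B \;=\; \alpha^{\sum_{i=1}^{s}(s-i)} B^{s} \;=\; \alpha^{s(s-1)/2} B^{s},
\]
using the elementary identity $\sum_{i=1}^{s}(s-i) = s(s-1)/2$. Then I would observe that the right-hand side can be rewritten as $\bigl(\alpha^{s-1} B^{2}\bigr)^{s/2}$, which is exactly the claimed bound.

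There is essentially no obstacle here: the content is entirely packaged in Lemma \ref{smallbreduced}, which was itself derived by iterating condition \ref{one} of Definition \ref{alphareduced} downward from $\|{\bf b}_s^*\| \leq B$. The only tiny subtlety worth mentioning in the write-up is that the bound is sharp in the sense that equality is asymptotically approached when all consecutive ratios $\|{\bf b}_{i+1}^*\|/\|{\bf b}_i^*\|$ are as small as allowed (namely $1/\alpha$) and $\|{\bf b}_s^*\|$ saturates $B$. This sharpness is what makes the lemma useful later — it is exactly the factor $(\alpha^{s-1}B^2)^{s/2}$ that will appear as a natural bound on the active determinant after each invocation of LLL\_with\_removals in step \ref{singlelll} of Algorithm \ref{mainalg}.
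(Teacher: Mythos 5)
Your proof is correct and is the natural (and almost certainly the paper's intended) argument: iterate condition~\ref{one} downward from the cap $\|{\bf b}_s^*\|\leq B$ of condition~\ref{threeb} (this is exactly Lemma~\ref{smallbreduced}), take the product over $i$, and use $\sum_{i=1}^s(s-i)=s(s-1)/2$ to collapse to $(\alpha^{s-1}B^2)^{s/2}$. No gap.
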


We now want to attack two problems, bounding the norm of each vector just before lattice reduction, and bounding the number of vectors throughout the algorithm.

\begin{lemma}\label{smallvectors}
If $s \leq c$ then just before step~\ref{singlelll} we have $\parallel {\bf b}_i \parallel^2 \leq 2\alpha^{4c}B^4$ for $i=1 \ldots s$.
\end{lemma}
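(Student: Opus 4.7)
The plan is to analyze the state of $M$ just before the call to LLL\_with\_removals in step~\ref{singlelll} by splitting each row ${\bf b}_i$ into the contribution from the first $r+j-1$ entries (untouched by the current while-iteration) and the contribution from the last entry (which was just rewritten in step~\ref{scale}).

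First I would bound the last entry of each row. In step~\ref{newprec} the exponent $\ell$ is set to $\max\{0,\lceil \log_2(\|{\bf y}_j\|_\infty/(\alpha^{2c}B^2))\rceil\}$. A direct case analysis on the $\max$ (when $\|{\bf y}_j\|_\infty \leq \alpha^{2c}B^2$ we get $\ell=0$ and the bound is immediate; otherwise the ceiling ensures $2^\ell \geq \|{\bf y}_j\|_\infty/(\alpha^{2c}B^2)$) shows that $\|{\bf y}_j/2^\ell\|_\infty \leq \alpha^{2c}B^2$ in either branch. Hence after step~\ref{scale} every entry of the final column of $M$ has absolute value $\leq \alpha^{2c}B^2$, contributing at most $\alpha^{4c}B^4$ to $\|{\bf b}_i\|^2$.

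Next I would bound the first $r+j-1$ entries, which have not been touched by either of steps~\ref{newprec} or~\ref{scale}. There are two cases. If this is the first pass of the while-loop for the current $j$, these entries were produced by step~\ref{adjoin}, which takes the $M$ from the end of iteration $j-1$ (or the initial $I_{r\times r}$ when $j=1$) and prepends a column of zeros (and possibly a zero row). If it is a later pass of the while-loop, these entries are simply the first $r+j-1$ columns of the $M$ output by the previous LLL\_with\_removals call within the same $j$. In every case the rows in question are (up to zero padding, which does not change norms) either the standard basis vectors or the rows of an $(\alpha,B)$-reduced sequence whose length $s$ is at most $c$ by hypothesis. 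Lemma~\ref{smallbreduced} therefore bounds the norm of these sub-vectors by $\alpha^{c-1}B$, so they contribute at most $\alpha^{2(c-1)}B^2$ to $\|{\bf b}_i\|^2$.

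Finally I would add the two bounds: $\|{\bf b}_i\|^2 \leq \alpha^{2(c-1)}B^2 + \alpha^{4c}B^4$. Since $\alpha \geq 1$ and $B \geq \sqrt{5}>1$, the first term is dominated by the second, so the total is at most $2\alpha^{4c}B^4$. The main obstacle is bookkeeping rather than mathematics: one must be careful to identify precisely which version of $M$ lives at each labeled instant of the algorithm, and to unify the ``first pass'' situation (where the preceding transformation is step~\ref{adjoin} acting on a previous $(\alpha,B)$-reduced matrix) with the ``repeat pass'' situation (where it is a previous LLL\_with\_removals call inside the same $j$), as well as the trivial initialization at $j=1$.
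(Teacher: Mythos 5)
Your proof is correct and takes the natural approach dictated by the algorithm's design: the rescaling in step 3c-ii is engineered precisely so the last column has $\|\cdot\|_\infty\leq\alpha^{2c}B^2$, while Lemma~\ref{smallbreduced} (together with the initialization) bounds the first $r+j-1$ entries of each row by $\alpha^{c-1}B$, and the sum of the two squared contributions is dominated by twice the second. This matches the argument the paper relegates to its appendix; the only point worth stating a touch more explicitly is that the while loop is always entered at least once per $j$ (since step 3a forces $\ell\geq 1$), so the ``end of iteration $j-1$'' matrix really is the output of a prior LLL\_with\_removals call and hence $(\alpha,B)$-reduced.
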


The full details of this proof can be found in the appendix.  The following theorem holds trivially when there is no condition on the $P_j$ or if $N=0$.  When $N > r$ and $B$ is at least a bit smaller than $X$ we can show that not all of the extra vectors stay in the lattice.  In other words, if there is enough of a difference between $B$ and $X$ then the sub-lattice aspect of the algorithm begins to allow for some slight additional savings.  Here the primary result of this theorem is allowing $\mathcal{O}(r)$ vectors with a relatively weak condition on the $P_j$.

\begin{theorem}\label{activevectors}
Throughout the algorithm we have $s \leq c$. 
\end{theorem}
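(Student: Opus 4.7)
I would prove the theorem by induction on the outer iteration index $j$, comparing upper and lower bounds on the active determinant at any moment where $s$ would exceed $c$.

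The easy case first: if condition~\eqref{pj} fails, then $c = r+N$, and since $s$ starts at $r$ and is incremented by at most one per outer iteration (only in step (b), over at most $N$ iterations), $s \leq r+N = c$ throughout. Now assume~\eqref{pj} holds, so $c = 2r+2k+1$. Let $j_0$ be the first outer iteration where $s$ reaches $c+1$; this must happen after a step (b), since (c-iii) only decreases $s$. Suppose for contradiction that the subsequent LLL\_with\_removals call in step (c-iii) performs no removal, so that its output is an $(\alpha,B)$-reduced basis of $s = c+1$ vectors. By Lemma~\ref{ADreduced} we then have the upper bound
\[
\textrm{AD} \;\leq\; (\alpha^{c} B^{2})^{(c+1)/2}.
\]
For a matching lower bound, I would track AD multiplicatively across the completed prior iterations $j' < j_0$. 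At the end of each completed iteration $j'$ (when $\ell = 0$ so the matrix is integer again), the successive rescalings of the last column during the while loop exactly cancel the initial division by $2^{\ell_a}$ done in step (b), so the net effect of iteration $j'$ on AD is multiplication by $|P_{j'}|$ (when $P_{j'} \neq 0$) divided by the Gram–Schmidt lengths of any vectors removed during $j'$. By the inductive hypothesis $s \leq c$ at all prior moments, so Lemma~\ref{smallvectors} applies and bounds each removed G-S length by $\sqrt{2}\, \alpha^{2c} B^{2}$. Writing $g$ for the number of good indices in $[1, j_0]$ and $n_{\textrm{rm}}$ for total removals so far, this yields
\[
\textrm{AD} \;\geq\; \frac{(2\alpha^{4r+4k+2} B^{2})^{g}}{(\sqrt{2}\, \alpha^{2c} B^{2})^{n_{\textrm{rm}}}}.
\]
Since $s = c+1 = r + g + b - n_{\textrm{rm}}$ and $b \leq k$, we get $g - n_{\textrm{rm}} \geq r + k + 2$; substituting into both bounds (and using $4r+4k+2 = 2c$ together with $\alpha \geq \sqrt{4/3}$, $B \geq \sqrt{5}$) shows the lower bound strictly exceeds the upper bound, a contradiction.

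The principal technical obstacle is the bookkeeping around iteration $j_0$ itself: its while loop is still mid-execution at the violating moment, so its scalings are only partially complete and its multiplicative contribution to AD has not settled. I would address this either by (i) invoking Lemmas~\ref{scalenondecrease} and~\ref{extraentry}, which say that adjoinings and scale-ups never decrease Gram–Schmidt lengths and hence never decrease AD below its "completed" value, so the lower bound persists through intermediate states; or by (ii) observing that the contradiction already holds without counting $j_0$'s own contribution, since $g \geq r + k + 2$ from completed prior iterations alone is already enough. After this, the remaining estimates are routine.
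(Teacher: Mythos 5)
Your proposal is essentially the same proof as the paper's: both establish the easy case $c=r+N$ trivially, then in the hard case derive a contradiction by comparing a lower bound on the active determinant (built up multiplicatively from the large $P_j$'s divided by the G-S lengths of removed vectors, via Lemma~\ref{smallvectors}) against the upper bound from Lemma~\ref{ADreduced} at the first moment $s$ gets too large. The paper works directly with the snapshot ``after iteration $j$'' and the identity $n_{\textrm{rm}} = r + j - s$, arriving at $\textrm{AD} \geq (2\alpha^{4r+4k+2}B^2)^{s-r-k}$, which already contradicts Lemma~\ref{ADreduced} when $s$ reaches $2r+2k$ (so in fact $s \leq c-1$). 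You set the threshold slightly higher at $s = c+1$ and phrase the argument as an induction with explicit good/bad bookkeeping ($g$, $b$, $n_{\textrm{rm}}$); the arithmetic works out. One small imprecision: as stated, your contradiction applies to the output of the LLL call \emph{after} $s$ has already reached $c+1$, so it only directly shows that a removal must occur, not that $s=c+1$ never happens. Your remedy (ii) is the right fix and matches the paper's logic — apply the two AD bounds at the end of iteration $j_0 - 1$, where the basis is $(\alpha,B)$-reduced and $s$ would already be $c$, and the same contradiction appears — so after any reduction $s$ stays strictly below $c$, and step (b) can only push it to $c$ at most. Minor cosmetic difference: you use $\sqrt{2}\,\alpha^{2c}B^2$ as the removed-vector bound where the paper uses the weaker but simpler $2\alpha^{2c}B^2$; either suffices.
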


\begin{proof}
If $c=r+N$ then $s \leq c$ is vacuously true.  So assume $c= 2(r+k)+1$ and all but $k=\mathcal{O}(r)$ of the $P_j$ satisfy $|P_j| \geq 2\alpha^{4r+4k+2}B^2$.  When the algorithm begins, $\textrm{AD}=1$ and $s=r$.  For $s$ to increase step~\ref{clmn} must finish without removing a vector.  If this happens during iteration $j$ then the $\textrm{AD}$ has increased by a factor $|P_j|$.  The LLL-switches inside of step~\ref{singlelll} do not alter the AD by Assumption~\ref{switchreqs}.  Each vector which is removed during step~\ref{singlelll} has G-S length $\leq 2\alpha^{4r+4k+2}B^2$ by Lemmas~\ref{smallvectors} and~\ref{maxlll}.   After iteration $j$ we have $n_{\textrm{rm}}=r+j-s$ as the total number of removed vectors.  All but $k$ of the $P_i$ have larger norm than any removed vector. Therefore the smallest $AD$ can be after iteration $j$ is $\geq {(2\alpha^{(4r+4k+2)}B^2)}^{j-k-n_{\textrm{rm}}}$.  Rearranging we get $\textrm{AD} \geq {(2\alpha^{4r+4k+2}B^2)}^{s-r-k}$.  This contradicts Lemma~\ref{ADreduced} when $s$ reaches $2r+2k$ for the first time because $(2\alpha^{4r+4k+2}B^2)^{r+k} \geq {(\alpha^{2r+2k-1}B^2)}^{r+k}$.  \end{proof}

\begin{corollary}\label{smallvectors2}
Throughout the algorithm we have $\parallel {\bf b}_{i}^* \parallel \leq 2\alpha^{2c}B^2$.
\end{corollary}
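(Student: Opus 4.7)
The plan is to observe that this corollary follows almost immediately from Lemma~\ref{smallvectors} once we verify that $\|{\bf b}_i^*\|$ does not spike between successive calls to step~\ref{singlelll}. By Theorem~\ref{activevectors} the hypothesis $s \leq c$ of Lemma~\ref{smallvectors} is always satisfied, so at the moment just before each invocation of step~\ref{singlelll} we have $\|{\bf b}_i\|^2 \leq 2\alpha^{4c}B^4$; combined with the trivial inequality $\|{\bf b}_i^*\| \leq \|{\bf b}_i\|$ this gives $\|{\bf b}_i^*\| \leq \sqrt{2}\,\alpha^{2c}B^2 < 2\alpha^{2c}B^2$ at those points. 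So all the work is in extending the bound from "just before step~\ref{singlelll}" to "throughout the algorithm."

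I would then cover the remaining states in two cases. \emph{During} step~\ref{singlelll}, the max of the $\|{\bf b}_i^*\|$'s cannot grow: Assumption~\ref{switchreqs}(2) forbids a switch from enlarging $\max\{\|{\bf b}_i^*\|,\|{\bf b}_{i+1}^*\|\}$, Assumption~\ref{switchreqs}(3) prevents non-switch operations from affecting G-S norms at all, and removing the trailing vector can only discard $\|{\bf b}_s^*\|$ without touching the rest. \emph{After} step~\ref{singlelll}, the basis is $(\alpha,B)$-reduced, so Lemma~\ref{smallbreduced} yields the even sharper estimate $\|{\bf b}_i^*\| \leq \alpha^{s-i}B \leq \alpha^{c-1}B$, which is comfortably below $2\alpha^{2c}B^2$ since $\alpha \geq \sqrt{4/3}$ and $B \geq \sqrt{5}$.

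Finally, between successive calls to step~\ref{singlelll}, only steps~\ref{adjoin} and~\ref{scale} actually modify $M$. Lemmas~\ref{extraentry} and~\ref{scalenondecrease} allow these operations to enlarge $\|{\bf b}_i^*\|$, but the very next event triggering step~\ref{singlelll} is always preceded by step~\ref{scale}, so the enlarged state is precisely the "just before step~\ref{singlelll}" state to which Lemma~\ref{smallvectors} already applies. A small amount of bookkeeping is needed to confirm that the intermediate states after step~\ref{adjoin} but before the first step~\ref{scale} of the while loop do not transiently exceed the bound; this follows from the choice of $\ell$ in step~\ref{newent}, which guarantees that the freshly adjoined last-column entries have absolute value at most $2$, so the new $\|{\bf b}_i\|^2$ exceeds the previous $(\alpha,B)$-reduced value by at most $4$, still far below $2\alpha^{4c}B^4$.

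The main obstacle—insofar as there is one—is not mathematical but organizational: one must enumerate all the points in the algorithm at which $M$ changes and justify the bound at each, using the extra factor of $2$ over Lemma~\ref{smallvectors}'s $\sqrt{2}$ as a safety margin. No new invariant or inequality is needed beyond those already proved.
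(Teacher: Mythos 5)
Your proof is correct, and since the paper states this as a corollary with no written proof, yours is a reasonable reconstruction of the argument it leaves implicit. The key steps are exactly right: Theorem~\ref{activevectors} makes Lemma~\ref{smallvectors} applicable, giving $\parallel {\bf b}_i^* \parallel \leq \parallel {\bf b}_i \parallel \leq \sqrt{2}\,\alpha^{2c}B^2$ just before each call to step~\ref{singlelll}; Assumption~\ref{switchreqs}(2)--(3) (equivalently Lemma~\ref{maxlll}) keeps the maximum Gram--Schmidt length from growing during the call; Lemma~\ref{smallbreduced} covers the state after the call; and Lemmas~\ref{scalenondecrease} and~\ref{extraentry} take care of the in-between modifications. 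One possible streamlining of your last paragraph: rather than tracking the freshly adjoined entries directly, you can note that the $\ell$ computed in step~\ref{newprec} is never larger than the $\ell$ from step~\ref{newent} (this uses $\alpha^{2c}B^2 \geq B^2 \geq 5 > 4$), so step~\ref{scale} only scales the last column \emph{up}, and by Lemma~\ref{scalenondecrease} the state right after step~\ref{adjoin} is already dominated by the state to which Lemma~\ref{smallvectors} applies; meanwhile prepending the row $(0,\ldots,0,P_j/2^{\ell})$ can only shrink the Gram--Schmidt lengths of the rows below it, while contributing a first Gram--Schmidt vector of norm $|P_j/2^{\ell}| < 2 \leq B$. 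Either way, your argument closes, and no gap remains.
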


We also use the active determinant to bound the number of iterations of the main loop, i.e. step~\ref{mainloop}.  First we show in the appendix that AD is increased by every scaling which does not end the main loop.

\begin{lemma}\label{ADincrease}
Every execution of step~\ref{scale} either increases the $\textrm{AD}$ by a factor $\geq \frac{{\alpha}^{c}B}{2}$ or sets $\ell=0$.
\end{lemma}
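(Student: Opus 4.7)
The plan is to derive an exact identity for $\mathrm{AD}(\tilde M)^2/\mathrm{AD}(M)^2$ via the matrix-determinant lemma, then bound its two factors using the rounding rule defining $\ell$ in step~\ref{newprec} together with an $(\alpha,B)$-reduced-type norm bound on the rows of $M$. Let $M$ denote the matrix just before step~\ref{scale} and $\tilde M$ the matrix just after; step~\ref{scale} multiplies the last column of $M$ by the factor $\beta := 2^{\ell_{\mathrm{old}}}/2^{\ell_{\mathrm{new}}}$, where $\ell_{\mathrm{old}}$ and $\ell_{\mathrm{new}}$ are the values of $\ell$ immediately before and after the update in step~\ref{newprec}. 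If $\ell_{\mathrm{new}}=0$ the second alternative of the lemma holds, so assume throughout that $\ell_{\mathrm{new}}>0$.

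Letting $v=(v_1,\dots,v_s)^{T}$ denote the last column of $M$ and $G=MM^{T}$, a direct expansion yields $\tilde M\tilde M^{T}=G+(\beta^2-1)\,vv^{T}$, and the matrix-determinant lemma gives
\[
\frac{\mathrm{AD}(\tilde M)^2}{\mathrm{AD}(M)^2} \;=\; 1+(\beta^2-1)\,q, \qquad q:=v^{T}G^{-1}v \in [0,1].
\]
The variational identity $q=\max_{u\ne 0}(u^{T}v)^2/(u^{T}Gu)$, applied with $u=e_k$ for the index $k$ achieving $|v_k|=\|v\|_\infty=:m$, yields $q\ge m^2/\|{\bf b}_k\|^2$. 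On the other hand, the defining formula $\ell_{\mathrm{new}}=\lceil\log_2(\|{\bf y}_j\|_\infty/(\alpha^{2c}B^2))\rceil>0$ forces the post-scaling maximum entry $\beta m=\|{\bf y}_j\|_\infty/2^{\ell_{\mathrm{new}}}$ to lie in $(\alpha^{2c}B^2/2,\,\alpha^{2c}B^2]$, so $\beta>\alpha^{2c}B^2/(2m)$. Combining these bounds and using $q\le 1$ in the form $(\beta^2-1)q+1\ge\beta^2q$,
\[
\frac{\mathrm{AD}(\tilde M)^2}{\mathrm{AD}(M)^2} \;\ge\; \beta^2 q \;\ge\; \frac{\alpha^{4c}B^4}{4\,\|{\bf b}_k\|^2}.
\]
Hence the desired $\mathrm{AD}(\tilde M)/\mathrm{AD}(M)\ge\alpha^{c}B/2$ reduces to the uniform bound $\|{\bf b}_k\|^2\le\alpha^{2c}B^2$.

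The main obstacle is this last norm bound, which requires a brief case analysis on whether step~\ref{scale} is the first call in the current main loop. In every main-loop iteration after the first for the current $j$, the basis entering step~\ref{scale} is $(\alpha,B)$-reduced from the preceding call to \textrm{LLL\_with\_removals}, so Lemma~\ref{smallbreduced} gives $\|{\bf b}_k\|\le\alpha^{s-1}B\le\alpha^{c-1}B$, comfortably below $\alpha^{c}B$. In the first main-loop iteration for each $j$, step~\ref{adjoin} has just appended a single new coordinate of absolute value at most $1$ (since $2^{\ell}\ge\max\{|P_j|,\|{\bf y}_j\|_\infty,2\}$ in step~\ref{newent}), so each squared norm grows by at most $1$ relative to the previous $(\alpha,B)$-reduced configuration; the hypotheses $\alpha\ge\sqrt{4/3}$ and $B\ge\sqrt 5$ absorb this single extra unit into $\alpha^{2c}B^2$. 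The base case $j=1$ starts from $I_{r\times r}$ adjoined with entries of magnitude at most $1$, giving $\|{\bf b}_k\|^2\le 2\le\alpha^{2c}B^2$ directly.
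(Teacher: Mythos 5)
Your route through the matrix-determinant lemma is a genuinely different and quite elegant way to get at the active determinant growth; the identity $\mathrm{AD}(\tilde M)^2/\mathrm{AD}(M)^2 = 1+(\beta^2-1)q$ with $q=v^TG^{-1}v\in[0,1]$, the variational lower bound $q\ge m^2/\|{\bf b}_k\|^2$, and the combination $(\beta^2-1)q+1\ge\beta^2 q$ are all correct, as is the bound $\beta m>\alpha^{2c}B^2/2$ extracted from the ceiling in step~\ref{newprec}. For inner-loop iterations after the first, the reduction to $\|{\bf b}_k\|^2\le\alpha^{2c}B^2$ via Lemma~\ref{smallbreduced} is fine.

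The gap is in the first-iteration case, and it traces to a reversed inequality on the floor. Step~\ref{newent} sets $\ell=\lfloor\log_2\max\{|P_j|,\|{\bf y}_j\|_\infty,2\}\rfloor$, so $2^\ell\le\max$, not $2^\ell\ge\max$ as you wrote; consequently the freshly appended entries satisfy $|\cdot|<2$, not $\le 1$, and the squared norm can grow by almost $4$, not by $1$. Your absorption step then requires $\alpha^{2(c-1)}B^2+4\le\alpha^{2c}B^2$, i.e.\ $(\alpha^2-1)\alpha^{2(c-1)}B^2\ge 4$, which at the extremal values $\alpha=\sqrt{4/3}$, $B=\sqrt5$ only holds once $c\ge 5$; for $c\in\{2,3,4\}$ the inequality fails as written. (The $j=1$ base case $\|{\bf b}_k\|^2<1+4=5\le\alpha^{2c}B^2$ happens to survive, but the general first-iteration case with a previously reduced basis does not.) This is a real hole rather than a cosmetic slip: with the claimed factor $\alpha^cB/2$ the final inequality is tight enough that the correct bound $<2$ on the adjoined entry must be tracked, and the argument needs either a sharper bound on $\|{\bf b}_k\|$ before scaling (e.g.\ exploiting that $s$ increased in step~\ref{adjoin} so the previous basis had $\le c-1$ vectors, giving $\alpha^{2(c-2)}B^2+4$, which still does not close for $c\le 4$), or a slightly weakened constant in the lemma, or a different treatment of the first scaling in each column.

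\end{document}
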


Now we are ready to prove that the number of iterations of the main loop is $\mathcal{O}(r+N)$.  This is important because it means that, although we look at all of the information in the lattice, the number of times we have to call lattice reduction is unrelated to $\log{X}$.

\begin{theorem}\label{numloops}
The number of iterations of step~\ref{mainloop} is $\mathcal{O}(r+N)$.
\end{theorem}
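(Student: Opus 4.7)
My plan is to classify each iteration of step~\ref{mainloop} into two types using Lemma~\ref{ADincrease} and bound each type separately. I call an iteration \emph{terminal} if step~\ref{newprec} of that iteration sets $\ell:=0$ (so the \textbf{while} loop exits after it), and \emph{progressive} otherwise. In a progressive iteration $\ell$ stays positive and Lemma~\ref{ADincrease} forces step~\ref{scale} to multiply the active determinant $\textrm{AD}$ by a factor at least $\alpha^c B/2 > 1$. Since each execution of the outer \textbf{for} loop step~\ref{clmn} produces at most one terminal iteration (the last one before the \textbf{while} exits), the total number of terminal iterations is at most $N$.

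The remaining task is to bound the number $T_{\textrm{prog}}$ of progressive iterations by $O(r+N)$, which I would do by amortizing against the growth of $\log(\textrm{AD})$. Initially $\log(\textrm{AD})=0$, and by Lemma~\ref{ADreduced} it is $\le \tfrac{c}{2}\log(\alpha^{c-1}B^2) = O(c(c+\log B))$ after every completed call to step~\ref{singlelll}. The only events that modify $\log(\textrm{AD})$ are: step~\ref{adjoin} (changing it by some $\delta_j$, equal to $\log(|P_j|/2^\ell)$ when $P_j\ne 0$ and non-negative otherwise by Lemma~\ref{extraentry}); step~\ref{scale} (contributing at least $\log(\alpha^c B/2)=\Omega(c+\log B)$ in each progressive iteration); and vector removals inside step~\ref{singlelll}, each dropping $\log(\textrm{AD})$ by at most $\log(2\alpha^{2c}B^2)=O(c+\log B)$ (using Lemma~\ref{smallvectors}) and occurring at most $n_{\textrm{rm}}\le r+N$ times in total, exactly as in the proof of Theorem~\ref{activevectors}.

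Telescoping $\log(\textrm{AD})$ from start to finish yields
\[
T_{\textrm{prog}}\cdot\log(\alpha^c B/2) \;\le\; \log(\textrm{AD}_{\textrm{final}}) \;-\; \sum_{j=1}^{N}\delta_j \;+\; n_{\textrm{rm}}\cdot\log(2\alpha^{2c}B^2),
\]
and dividing by $\log(\alpha^c B/2)$ gives the desired $O(r+N)$ bound on $T_{\textrm{prog}}$ provided the right-hand side is $O((r+N)(c+\log B))$. The first and third terms on the right are immediately that size, so the total $T = O(N) + T_{\textrm{prog}} = O(r+N)$ follows as soon as $-\sum_j\delta_j$ is absorbed into the same budget. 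This is the main obstacle: a single $\delta_j$ can be substantially negative when $\|{\bf y}_j\|_{\infty}$ dominates $|P_j|$ in step~\ref{newent}. The resolution I would pursue is that the $1/2^{\ell}$ scaling applied to the last column in step~\ref{adjoin} is gradually undone by the successive step~\ref{scale} actions of the same column's main loop before $\ell$ reaches $0$, so each negative $\delta_j$ is paid back by progressive iterations inside the same column; formalizing this cancellation (and using that between two consecutive column iterations $\ell$ returns to $0$, so the scaled-up state is the same bookkeeping baseline) closes the estimate and produces $T = O(r+N)$.
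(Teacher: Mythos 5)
Your overall framework matches the paper's stated strategy: classify iterations by whether step~\ref{newprec} sets $\ell=0$, bound the terminal ones by $N$, and amortize the progressive ones against the growth of $\log(\textrm{AD})$ using Lemma~\ref{ADincrease}, Lemma~\ref{ADreduced}, and Corollary~\ref{smallvectors2}. The telescoping inequality you set up is sound, and you correctly single out the one term it does not handle automatically, namely $-\sum_j\delta_j$, the cumulative drop of $\log(\textrm{AD})$ that step~\ref{adjoin} can cause when $|P_j|/2^{\ell}<1$.

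The proposed resolution of that term is circular. You want an upper bound on $T_{\textrm{prog}}$, yet you argue that each negative $\delta_j$ is ``paid back by progressive iterations inside the same column.'' That payback \emph{is} a collection of progressive iterations, so the reasoning only tells you that $T_{\textrm{prog}}$ is large enough to undo the scaling (a lower bound), not that it is small. Switching to an unscaled-AD bookkeeping baseline has the dual defect: step~\ref{scale} then leaves the potential unchanged, so you have nothing to charge the progressive iterations against. What actually closes the gap is a direct pointwise estimate $\delta_j\geq -\mathcal{O}(c+\log B)$. When $P_j\neq 0$, step~\ref{adjoin} multiplies $\textrm{AD}$ by exactly $|P_j|/2^{\ell}$: the new top row $(0,\ldots,0,P_j/2^{\ell})$ contributes its own length, and projecting it out annihilates the new column, leaving the remaining Gram--Schmidt lengths unchanged. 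Since $2^{\ell}\leq\max\{|P_j|,\|{\bf y}_j\|_{\infty},2\}$ and ${\bf y}_j=M[1,\ldots,r]\cdot{\bf x}_j$ with the rows of $M$ coming from the previous $(\alpha,B)$-reduced call (so $\|{\bf b}_i\|\leq\alpha^{c-1}B$ by Lemma~\ref{smallbreduced}) and $|x_{i,j}|\leq|P_j|$ under the paper's mod-$P_j$ normalization, one has $\|{\bf y}_j\|_{\infty}\leq r\alpha^{c-1}B\,|P_j|$, hence $|P_j|/2^{\ell}\geq (r\alpha^{c-1}B)^{-1}$ and $\delta_j\geq -\log(r\alpha^{c-1}B)=-\mathcal{O}(c+\log B)$. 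Substituting this into your telescoping inequality gives $-\sum_j\delta_j=\mathcal{O}(N(c+\log B))$ and the desired $T_{\textrm{prog}}=\mathcal{O}(r+N)$.
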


The strategy of this proof is to show that each succesful scaling increases the active determinant and to bound the number of iterations using Lemma~\ref{ADreduced} and Corollary~\ref{smallvectors2}.  For space constraints this proof is provided in the appendix.

\subsubsection*{A study of the progress function}\label{psect}

We will now amortize the costs of lattice reduction over each of the $\mathcal{O}(r+N)$ calls to step~\ref{singlelll}.  We do this by counting switches, using Progress $PF$ (defined below).  In order to mimic the proof from~\cite{LLL} for our algorithm we introduce a type of Energy function which we can use over many calls to LLL (not only a single call).

\begin{definition}
Let ${{\bf b}_ 1}, \ldots, {{\bf b}_ s}$ be the current basis at any point in our algorithm, let ${\bf b}_1^*, \ldots, {\bf b}_ s^*$ be their GSO, and $l_i := \log_{\gamma}{\parallel {\bf b}_{i}^* \parallel^2}$ for all $i=1\ldots s$.  We let $n_{\textrm{rm}}$ be the number of vectors which have been removed so far in the algorithm.  Then we define the progress function $PF$ to be:

$$PF:= 0 \cdot l_0 + \cdots + (s-1)\cdot l_s + n_{\textrm{rm}}\cdot c \cdot \log_{\gamma}{(4\alpha^{4c}B^4)}.$$
\end{definition}

This function is designed to effectively bound the largest number of switches which can have occurred so far.  To prove that it serves this purpose we must prove the following lemma:

\begin{lemma}\label{Pincreases}
After step~\ref{init} Progress $PF$ has value 0.  No step in our algorithm can cause the progress $PF$ to decrease.  Further, every switch which takes place in step~\ref{singlelll} must increase $PF$ by at least 1.
\end{lemma}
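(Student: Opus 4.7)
The plan is to prove the three assertions in order by inspecting how each action of Algorithm~\ref{mainalg} changes $PF$. For the initialization, after step~\ref{init} we have $M = I_{r \times r}$, so each ${\bf b}_i^*$ is a standard unit vector, every $l_i=0$, and $n_{\textrm{rm}}=0$, hence $PF=0$.

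For monotonicity, the nontrivial actions are: (a) adjoining a column and a $P_j$-row in step~\ref{adjoin}; (b) the scaling in step~\ref{scale}; and (c) the switches, size-reductions and possible removals inside the call to LLL\_with\_removals in step~\ref{singlelll}. In (a) with $P_j\neq 0$, the new top row ${\bf b}_0=(0,\dots,0,P_j/2^\ell)$ has only its last coordinate nonzero, so ${\bf b}_0^*={\bf b}_0$ and projecting the extended old rows onto its orthogonal complement simply zeros their last coordinate; a brief induction then gives ${\bf b}_i^{*,\mathrm{new}}={\bf b}_i^{*,\mathrm{old}}\oplus 0$ for $i\geq 1$. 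After re-indexing, the new $l$-sequence is $\bigl(\log_\gamma|P_j/2^\ell|^2,\,l_1^{\mathrm{old}},\dots,l_s^{\mathrm{old}}\bigr)$, so
\[
\Delta PF \;=\; \sum_{i=1}^{s} i\,l_i^{\mathrm{old}} \;-\; \sum_{i=1}^{s}(i-1)\,l_i^{\mathrm{old}} \;=\; \sum_{i=1}^{s} l_i^{\mathrm{old}} \;=\; 2\log_\gamma \mathrm{AD}^{\mathrm{old}} \;\geq\; 0,
\]
the last inequality because the previous $j$-iteration terminated with $\ell=0$, so $M$ has integer entries at the top of step~\ref{adjoin} and $\mathrm{AD}^{\mathrm{old}}=\sqrt{\det(MM^T)}\geq 1$. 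The case $P_j=0$ only adds a column, and Lemma~\ref{extraentry} then gives $\Delta PF\geq 0$. Step~\ref{scale} scales the last coordinate up by $\beta=2^{\ell_{\mathrm{prev}}-\ell_{\mathrm{new}}}\geq 1$, so Lemma~\ref{scalenondecrease} applies. Inside step~\ref{singlelll}, size-reductions leave every $\|{\bf b}_i^*\|$ unchanged by Assumption~\ref{switchreqs}(3), so $PF$ is unaffected.

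For a switch of ${\bf b}_i$ and ${\bf b}_{i+1}$, Assumption~\ref{switchreqs}(1) gives $l_{i+1}^{\mathrm{new}}\geq l_{i+1}^{\mathrm{old}}+1$. Because the switch preserves the $2$-plane $\mathbb{R}{\bf b}_i+\mathbb{R}{\bf b}_{i+1}$, it preserves the product $\|{\bf b}_i^*\|\cdot\|{\bf b}_{i+1}^*\|$, so $l_i^{\mathrm{new}}+l_{i+1}^{\mathrm{new}}=l_i^{\mathrm{old}}+l_{i+1}^{\mathrm{old}}$. Setting $\delta := l_{i+1}^{\mathrm{new}}-l_{i+1}^{\mathrm{old}}\geq 1$, the change in $PF$ is $i\delta+(i-1)(-\delta)=\delta\geq 1$, which simultaneously proves monotonicity and the ``$\geq 1$ per switch'' claim.

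The main obstacle is the removal step, since discarding the trailing vector subtracts a potentially large positive term from $\sum(i-1)l_i$; here the $n_{\mathrm{rm}}$-coefficient in the definition of $PF$ has to pay for that loss. A removal takes $s$ to $s-1$, erases the contribution $(s-1)l_s$, and increments $n_{\textrm{rm}}$, so
\[
\Delta PF \;=\; -(s-1)\,l_s \;+\; c\,\log_\gamma(4\alpha^{4c}B^4).
\]
By Corollary~\ref{smallvectors2} the removed vector satisfies $\|{\bf b}_s^*\|^2\leq 4\alpha^{4c}B^4$, so $l_s\leq\log_\gamma(4\alpha^{4c}B^4)$; by Theorem~\ref{activevectors} we have $s\leq c$. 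Substituting yields $\Delta PF\geq (c-s+1)\log_\gamma(4\alpha^{4c}B^4)\geq 0$, completing all three parts of the lemma.
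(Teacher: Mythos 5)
Your proof is correct and follows essentially the approach the paper uses: verify $PF=0$ at initialization, then check each action (adjoining a column/row, rescaling, size-reduction, switch, removal) one at a time, invoking Lemmas~\ref{scalenondecrease}--\ref{extraentry}, Theorem~\ref{activevectors}, and Corollary~\ref{smallvectors2} to pay for the removal step with the $n_{\mathrm{rm}}$ term. The one step that deserves a sentence of caution is your use of ``the switch preserves the product $\|{\bf b}_i^*\|\cdot\|{\bf b}_{i+1}^*\|$'': this is not literally one of the three clauses of Assumption~\ref{switchreqs}, but it is the same implicit fact the paper itself relies on (e.g.\ in the proof of Theorem~\ref{activevectors}, where switches are said to leave the active determinant unchanged), and it holds for any LLL-type switch since the sub-lattice spanned by ${\bf b}_1,\ldots,{\bf b}_{i+1}$ and the earlier G-S vectors are unchanged. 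With that understood, all three claims are established.
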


\begin{theorem}\label{switchcomplexity}
Throughout our algorithm the total number of switches used by all calls to step~\ref{singlelll} is $\mathcal{O}((r+N)c(c+\log{B}))$ with $P_j$ and  $\mathcal{O}(c^2(c+\log{B}))$ with no $P_j$.
\end{theorem}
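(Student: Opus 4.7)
The plan is to exploit Lemma~\ref{Pincreases}: since every switch inside step~\ref{singlelll} increases $PF$ by at least $1$ and no step of Algorithm~\ref{mainalg} decreases $PF$, the total number of switches across all calls is at most the value of $PF$ at termination (as $PF$ starts at $0$). Hence it suffices to upper bound
$$PF_{\textrm{final}} \;=\; \sum_{i=1}^{s}(i-1)\, l_i \;+\; n_{\textrm{rm}}\cdot c \cdot \log_\gamma(4\alpha^{4c}B^4),$$
where $l_i = \log_\gamma \|{\bf b}_i^*\|^2$ and $\alpha,\gamma$ are absolute constants fixed by the chosen sub-reducer.

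For the first summand I would combine two already-established bounds. Corollary~\ref{smallvectors2} guarantees $\|{\bf b}_i^*\| \leq 2\alpha^{2c}B^2$ throughout the run, which translates into $l_i = O(c + \log B)$. Theorem~\ref{activevectors} gives $s \leq c$ at all times. Plugging these into $\sum_{i=1}^s (i-1) l_i$ yields a contribution of $O(c^2(c+\log B))$, valid both with and without $P_j$ columns.

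For the second summand I would bound $n_{\textrm{rm}}$ by the total number of rows ever adjoined to $M$ during the entire run. The initial identity contributes $r$ rows, and step~\ref{adjoin} adds exactly one extra row for each column index $j$ with $P_j \neq 0$; step~\ref{singlelll} never creates rows. So $n_{\textrm{rm}} \leq r + N$ in the general case and $n_{\textrm{rm}} \leq r \leq c$ when no $P_j$ blocks are present. Multiplying by the coefficient $c\cdot \log_\gamma(4\alpha^{4c}B^4) = O(c(c+\log B))$ gives $O((r+N)c(c+\log B))$ or $O(c^2(c+\log B))$ respectively; taking the dominating term in each case matches the statement. The only real obstacle is purely notational (the constants absorbed in the change-of-base from $\log_\gamma$ to $\log$ and in the $\alpha^{4c}$ factor), since all heavy lifting has already been done in Lemma~\ref{Pincreases}, Corollary~\ref{smallvectors2}, and Theorem~\ref{activevectors}; the present theorem is essentially the designed payoff from how $PF$ was calibrated.
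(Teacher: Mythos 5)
Your proof is correct and takes essentially the same approach as the paper: both rely on Lemma~\ref{Pincreases} to bound the total switch count by the final value of $PF$, then bound $PF$ via Theorem~\ref{activevectors}, the $\|{\bf b}_i^*\|$ bound (you cite Corollary~\ref{smallvectors2}, the paper cites Lemma~\ref{smallvectors}; either works), and $n_{\textrm{rm}} \leq r+N$ (or $\leq r \leq c$ with no $P_j$). The only cosmetic difference is that the paper argues $PF$ is maximized when $n_{\textrm{rm}} = r+N$ and $s=0$, whereas you bound the two summands independently and use $c \leq r+N$ to absorb the $\mathcal{O}(c^2(c+\log B))$ term; both give the stated bound.
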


\begin{proof}
Since Lemma~\ref{Pincreases} shows us that $PF$ never decreases and every switch increases $PF$ by at least 1, then the number of switches is bounded by $PF$.  However $PF$ is bounded by Lemma~\ref{smallvectors} which bounds $l_i \leq \log_{\gamma}{(\alpha^{4r}B^4)}$, Theorem~\ref{activevectors} which bounds $s \leq c$, and the fact that we cannot remove more vectors than are given which implies $n_{\textrm{rm}} \leq r+N$.  Further we can see that $(s-1)l_s \leq (c-1)\log_{\gamma}{(4\alpha^{4c}B^4)}$ so $PF$ is maximized by making $n_{\textrm{rm}} = (r+N)$ (or $c$ if no vectors added) and $s=0$.  In which case we have $\textrm{number of switches} \leq PF \leq (r+N)(c-1)(\log_{\gamma}{(4\alpha^{4c}B^4)} = \mathcal{O}((r+N)c(c+\log{B}))$.  Also if there are no $P_j$, we can replace $r+N$ by $c$. 
\end{proof}

\section{Complexity bound of main algorithm}\label{compsect}

In this section we wish to prove a bound for the overall bit-complexity of algorithm~\ref{mainalg}.  The complexity bound must rely on the complexity bound of the lattice reduction algorithm we choose for step~\ref{singlelll}.  The results in the previous sections have not relied on this choice.  We will present our complexity bound using the H-LLL algorithm from~\cite{HLLL}.  We choose H-LLL for this result because of its favorable complexity bound and because the analysis of our necessary adaptations is relatively simple.  See~\cite{HLLL} for more details on H-LLL.

We make some minor adjustments to the H-LLL algorithm and its analysis.  The changes to the algorithm are the following: \begin{itemize}
\item We have a single non-integer entry in each vector of bit-length $\mathcal{O}(c+\log{X})$.

\item Whenever the final vector has G-S length sufficiently larger than $B$, it is removed.  This has no impact on the complexity analysis.
\end{itemize}

We use $\tau$ as the number of switches used in a single call to H-LLL.  This allows the analysis of progress $PF$ to be applied directly.  The following theorem is an adaptation of the main theorem in~\cite{HLLL} adapted to reflect our adjustments.  

\begin{theorem}\label{taucost}
If a single call to step~\ref{singlelll}, with H-LLL~\cite{HLLL} as the chosen variation of LLL, uses $\tau$ switches then the CPU cost is bounded by $\mathcal{O}((\tau+c+\log{B})c^2[(r+N)(c+\log{B})+\log{X}])$ bit-operations.
\end{theorem}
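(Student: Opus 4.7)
The plan is to invoke the main complexity theorem of H-LLL from~\cite{HLLL} with parameters adapted to the matrix fed into step~\ref{singlelll}, and verify that the two modifications listed before the theorem (a single rational entry in the last column, and the possibility of removing trailing vectors) do not affect the asymptotic bound. First I would identify the relevant parameters: by Theorem~\ref{activevectors}, the input has $s\leq c$ rows, the ambient dimension is at most $r+N+1$, and by Lemma~\ref{smallvectors} every integer entry is bounded in absolute value by $\mathcal{O}(\alpha^{2c}B^2)$, giving integer bit-length $\mathcal{O}(c+\log B)$. The lone rational entry sits in the last column and has bit-length $\mathcal{O}(c+\log X)$, because the power-of-two denominator introduced in step~\ref{scale} can grow as large as the original $\ell$.

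Next I would decompose the H-LLL cost into three independent factors whose product is the claimed bound:
\begin{enumerate}
\item \textbf{Number of outer iterations of H-LLL's main loop: $\tau+\mathcal{O}(c+\log B)$.} The $\tau$ term accounts for iterations that perform a switch; the additive $\mathcal{O}(c+\log B)$ covers iterations in which $\kappa$ advances without switching, whose total is bounded using Corollary~\ref{smallvectors2} together with the standard $\kappa$-bookkeeping of H-LLL.
\item \textbf{Floating-point Gram-Schmidt work per iteration: $\mathcal{O}(c^2)$ operations at precision $\mathcal{O}(c+\log B)$.} The precision bound follows because, by Corollary~\ref{smallvectors2}, every Gram-Schmidt length $\|{\bf b}_i^*\|$ stays below $2\alpha^{2c}B^2$, so the input to H-LLL is well-conditioned in the sense required by~\cite{HLLL} with precision $\mathcal{O}(c+\log B)$ sufficient for provable correctness of size-reduction.
\item \textbf{Integer size-reduction work per iteration: $\mathcal{O}(c)$ row combinations, each of cost $\mathcal{O}((r+N)(c+\log B)+\log X)$.} Each row combination touches $r+N$ entries of bit-length $\mathcal{O}(c+\log B)$ plus one rational entry of bit-length $\mathcal{O}(\log X)$, giving the stated per-combination cost.
\end{enumerate}

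Multiplying the three factors produces $\mathcal{O}((\tau+c+\log B)\,c^2\,[(r+N)(c+\log B)+\log X])$. Removal of a trailing vector only decreases $s$, so it can never increase any of the three factors and is safe to ignore.

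The main obstacle, which is where I would spend the most care, is justifying that the single rational entry per row does not inflate the H-LLL precision requirement. I would argue this by noting that H-LLL's required precision is driven by the Gram-Schmidt lengths of the basis -- all controlled by Corollary~\ref{smallvectors2} -- rather than by the maximum individual entry size, so the large rational entry enters the cost only additively via the $\log X$ term in the integer work, never multiplicatively in the precision. The residual task is the bookkeeping check that the H-LLL size-reduction procedure analyzed in~\cite{HLLL} carries over unchanged when one coordinate per row is rational rather than integer, which is immediate because H-LLL treats its input as a real-valued basis whose complexity depends only on bit-length bounds for the coordinates used in arithmetic.
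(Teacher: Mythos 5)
The paper gives no explicit proof of this theorem -- it is presented as a direct adaptation of the main complexity theorem of~\cite{HLLL}, so a correct reconstruction must reproduce that theorem's accounting under the stated modifications. Your plan (identify $s\leq c$, small entries of bit-length $\mathcal{O}(c+\log B)$, one rational entry controlled by $\log X$, and the observation that removals only shrink parameters) is the right framework, and your discussion at the end of why the lone rational entry does not inflate the required precision is exactly the point the paper's remark is gesturing at.

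However, the central arithmetic does not work as written. You declare three ``independent factors whose product is the claimed bound,'' but items~(2) and~(3) are both \emph{per-iteration} costs; they must be \emph{summed} to get a per-iteration cost, and that sum multiplied by the iteration count~(1). Taken literally, multiplying all three yields an expression of a different shape than the theorem. And if one instead sums~(2) and~(3) and multiplies by~(1), the size-reduction term comes out as $(\tau+c+\log B)\cdot c\,[(r+N)(c+\log B)+\log X]$, which is a factor of $c$ \emph{smaller} than the theorem claims. The missing $c$ is hiding in item~(3): a row combination $\mathbf{b}_\kappa \leftarrow \mathbf{b}_\kappa - q\,\mathbf{b}_i$ in H-LLL uses an integer multiplier $q$ obtained by rounding a floating-point $\mu$ at working precision, so $q$ has $\mathcal{O}(c)$ bits. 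A single scalar multiply-add therefore costs $\mathcal{O}(c\cdot(c+\log B))$ (resp.\ $\mathcal{O}(c\cdot\log X)$ for the rational coordinate) with classical arithmetic, not $\mathcal{O}(c+\log B)$ (resp.\ $\mathcal{O}(\log X)$). Restoring this factor gives a per-iteration size-reduction cost of $\mathcal{O}\bigl(c^2[(r+N)(c+\log B)+\log X]\bigr)$, which, multiplied by $\mathcal{O}(\tau+c+\log B)$ iterations, is precisely the stated bound.

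A second, related issue: you claim the H-LLL working precision is $\mathcal{O}(c+\log B)$. In fact H-LLL (like ${\rm L}^2$) runs at precision $\Theta(c)$, depending on the rank only and not on the magnitudes of the entries; this is essential here. If the precision were $\mathcal{O}(c+\log B)$ as you state, then item~(2) would cost $\mathcal{O}(c^2(c+\log B)^2)$ per iteration with classical arithmetic, which is \emph{not} absorbed by $c^2(r+N)(c+\log B)$ whenever $\log B \gg r+N$. With precision $\mathcal{O}(c)$, the floating-point work per iteration is $\mathcal{O}\bigl(c(r+N)\cdot c^2\bigr)$, which is absorbed as required. So the precision claim needs to be sharpened from $\mathcal{O}(c+\log B)$ to $\mathcal{O}(c)$ both for correctness of the dominance argument and for your concluding remark (that the rational entry contributes only additively via $\log X$) to go through cleanly.
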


Now we are ready to complete the complexity analysis of the our algorithm.

\begin{theorem}\label{overallcomplexity}
The cost of executing algorithm~\ref{mainalg} with H-LLL~\cite{HLLL} as the variant of LLL in step~\ref{singlelll} is $$ \mathcal{O}((r+N)c^3(c+\log{B})[\log{X}+(r+N)(c+\log{B})])$$ CPU operations, where $B$ is a search parameter chosen by the user, $|A[i,j]| \leq X$ for all $i,j$, and $c=r+N$ or $c=\mathcal{O}(r)$ (see definition~\ref{pj_def} for details). If there are no $P_j$'s then the cost is $$\mathcal{O}((r+N+c^2)(c+\log{B})c^2[\log{X}+(r+N)(c+\log{B})]).$$
\end{theorem}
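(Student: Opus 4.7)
The plan is to add up the cost of Theorem \ref{taucost} over every call to step \ref{singlelll} and then absorb the (cheap) auxiliary costs from the rest of Algorithm \ref{mainalg}. Let $T$ denote the total number of executions of step \ref{singlelll} and let $\tau_1,\ldots,\tau_T$ denote the switch counts of those calls. Theorem \ref{numloops} gives $T=\mathcal{O}(r+N)$. Theorem \ref{switchcomplexity} gives
$\sum_i \tau_i = \mathcal{O}((r+N)\,c(c+\log B))$ in the $P_j$ case and $\mathcal{O}(c^2(c+\log B))$ with no $P_j$'s. Applying Theorem \ref{taucost} to each call and pulling the $i$-independent factor out of the sum, the total lattice-reduction cost is
\[
\Bigl(\sum_{i=1}^T \tau_i \;+\; T(c+\log B)\Bigr)\cdot c^2\bigl[(r+N)(c+\log B)+\log X\bigr].
\]

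In the $P_j$ case the first summand in the bracket dominates (since $c\geq 1$), giving $\mathcal{O}((r+N)c(c+\log B))$; plugging in yields exactly the first claimed bound $\mathcal{O}((r+N)c^3(c+\log B)[\log X+(r+N)(c+\log B)])$. In the no-$P_j$ case the two summands combine to $\mathcal{O}((r+N+c^2)(c+\log B))$, giving the second bound. So once the auxiliary costs are shown to be absorbed, the theorem follows immediately from Theorems \ref{taucost}, \ref{numloops}, and \ref{switchcomplexity}.

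What remains is to verify the non-reduction work. The non-trivial items are (a) the $N$ matrix-vector products ${\bf y}_j := M[1,\ldots,r]\cdot {\bf x}_j$ in step \ref{newent}, (b) the recomputations of the last column in step i, and (c) the rescaling/column-replacement in step \ref{scale}. For (a) the current rows of $M[1,\ldots,r]$ have entries bounded via $\alpha$-reducedness (Definition \ref{alphareduced}, part \ref{two}) combined with Corollary \ref{smallvectors2}, giving bit-length $\mathcal{O}(c+\log B)$; each product therefore costs $\mathcal{O}(sr(c+\log B+\log X))$ bit operations, and summed over $j=1,\ldots,N$ this is easily dominated by the already-obtained main term. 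For (b) and (c) we are only touching a single column of $M$ whose entries are kept at bit length $\mathcal{O}(c+\log B)$ by construction (see step i, which rescales $\|{\bf y}_j\|_\infty$ down to $\mathcal{O}(\alpha^{2c}B^2)$), so each pass of step \ref{mainloop} contributes $\mathcal{O}(s(c+\log B))$, and summing $T=\mathcal{O}(r+N)$ such passes is again dominated.

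The main obstacle, as I see it, is bookkeeping rather than a genuine mathematical difficulty: one must check that at each stage the rows of $M$ have bit length compatible with the per-call cost of Theorem \ref{taucost} (so that H-LLL is being fed inputs of the size that Theorem \ref{taucost} assumes), and that the non-integer last coordinate introduced by the scaling in step \ref{scale} really does stay of bit length $\mathcal{O}(c+\log X)$ as the theorem requires. Both facts follow from Lemma \ref{smallvectors}, Corollary \ref{smallvectors2}, and the choice of $\ell$ in steps \ref{newent} and i, which together guarantee that $\|{\bf y}_j/2^\ell\|_\infty = \mathcal{O}(\alpha^{2c}B^2)$ after scaling and $\mathcal{O}(X)$ at the moment of adjoining. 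Once this is verified the complexity assembly is entirely routine.
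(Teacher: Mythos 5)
Your proposal matches the paper's own proof essentially step for step: sum Theorem~\ref{taucost} over all calls to step~\ref{singlelll}, bound $\sum\tau_i$ via Theorem~\ref{switchcomplexity} and the number of calls via Theorem~\ref{numloops}, then check that the non-reduction overhead (chiefly step~\ref{newent}) is absorbed using the entry-size bounds from Corollary~\ref{smallvectors2} and $\alpha$-reducedness. One small slip worth flagging: you price each scalar multiplication ${\bf b}_i[m]\cdot x_{m,j}$ at $\mathcal{O}(c+\log B+\log X)$, but for classical arithmetic the cost of multiplying integers of bit-lengths $\mathcal{O}(c+\log B)$ and $\mathcal{O}(\log X)$ is their \emph{product} $\mathcal{O}((c+\log B)\log X)$, giving the paper's $\mathcal{O}(Ncr(c+\log B)\log X)$ total for step~\ref{newent}; this correction does not change the conclusion, since that term is still dominated by the main $\mathcal{O}((r+N)c^3(c+\log B)\log X)$ contribution.
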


\begin{proof}
Steps~\ref{init},~\ref{adjoin},~\ref{newprec}, and~\ref{scale} have negligible costs in comparison to the rest of the algorithm.  Step~\ref{newent} is called $N$ times, each call performs $s$ inner products.  While each inner product performs $r$ multiplications each of the form ${\bf b}_{i}[m] \cdot x_{m,j}$ appealing to Corollary~\ref{smallvectors2} we bound the cost of each multiplication by $\mathcal{O}( (c+\log{B}) \log{X})$.  Since Theorem~\ref{activevectors} gives $s \leq c$ we know that the total cost of all calls to step~\ref{newent} is $\mathcal{O}(Ncr(c+\log{B})\log{X})$.  Let $k =\mathcal{O}(r+N)$ be the number of iterations of the main loop.  Let $\tau_i$ be the number of LLL switches used in the $i^{\textrm{th}}$ iteration.  Theorem~\ref{taucost} gives the cost of the $i^{\textrm{th}}$ call to step~\ref{singlelll} as $= \mathcal{O}((\tau_i+c+\log{B})c^2 [(r+N)(c+\log{B})+\log{X}])$.  Theorem~\ref{switchcomplexity} implies that $\tau_1 + \cdots + \tau_k =\mathcal{O}((r+N)c(c+\log{B}))$ (or $\mathcal{O}(c^2(c+\log{B}))$ when there are no $P_j$'s).  The total cost of all calls to step~\ref{singlelll} is then $\mathcal{O} ([k(c+\log{B})+\tau_1 + \cdots + \tau_k ]c^2 [(r+N)(c+\log{B})+\log{X}])$.  The term $[k(c+\log{B})+\tau_1 + \cdots + \tau_k]$ can be replaced by $\mathcal{O}((r+N)c(c+\log{B}))$ (if no $P_j$ then $\mathcal{O}((r+N+c^2)(c+\log{B}))$).  The complete cost of is now $\mathcal{O}(Nrc(c+\log{B})\log{X} + (r+N)c^3[c+\log{B})(\log{X}+(r+N)(c+\log{B})])$.  The first term is absorbed by the cost of the second term, proving the theorem.  If there are no $P_j$ then we get $\mathcal{O}((r+N+c^2)(c+\log{B})c^2[\log{X}+(r+N)(c+\log{B})])$.
\end{proof}

\section{New complexities for applications of main algorithm}\label{newcomplexities}

Our algorithm has been designed for some applications of lattice reduction.  In this section we justify the importance of this algorithm by directly applying it to two classical applications of lattice reduction.

\subsubsection*{New complexity bound for factoring in $\mathbb{Z}[x]$}\label{polynomials}

In~\cite{BHKS} it is shown that the problem of factoring a polynomial, $f \in \mathbb{Z}[x]$, can be accomplished by the reduction of a large knapsack-type lattice.  In this subsection we merely apply our algorithm to the lattice suggested in ~\cite{BHKS}.

\vspace{.1in}

\noindent \textbf{Reminders from~\cite{BHKS}.} Let $f \in \mathbb{Z}[x]$ be a polynomial of degree $N$.  Let $A$ be a bound on the absolute value of the coefficients of $f$.  Let $p$ be a prime such that $f \equiv 
l_f f_1 \cdots f_r \/~\mathrm{mod}\/~p^a$ a separable irreducible factorization of $f$ in the $p$-adics lifted to precision $a$, the $f_i$ are monic, and $l_f$ is the leading coefficient of $f$.  For our purposes we choose $B:=\sqrt{r+1}$.  

We will make some minor changes to the All-Coefficients matrix defined in~\cite{BHKS} 
to produce a matrix that looks like:

$$\left( \begin{array}{cccccc}
 &  &  &  &  & p^{a-b_N} \\ 
 &  &  &  & \Ddots &  \\ 
 &  &  & p^{a-b_1} &  &  \\ 
1 &  &  & x_{1,1} & \cdots & x_{1,N} \\ 
 & \ddots &  & \vdots & \ddots & \vdots \\ 
 &  & 1 & x_{r,1} & \cdots & x_{r,N}
\end{array} \right).$$

Here $x_{i,j}$ is the $j^{\textrm{th}}$ coefficient of $f'_i \cdot f / f_i \textrm{ mods }p^a$ divided by $p^{b_j}$ and $p^{b_j}$ represents $\sqrt{N}$ times a bound on the $j^{\mathrm{th}}$ coefficient of $g' \cdot f/g$ for any true factor $g \in \mathbb{Z}[x]$ of $f$.  In this way the target vectors will be quite small.  An empty spot in this matrix represents a zero entry.  This matrix has $p^{a-b_j} > 2^{N^2+N\log(A)} > 2{\alpha}^{4r+2}B^2$ for all $j$.  An $(\alpha,B)$-reduction of this matrix will solve the recombination problem by a similar argument to the one presented in~\cite{BHKS} and refined in~\cite{phd}.  Now we look at the computational complexity of making and reducing this matrix which gives the new result for factoring inside $\mathbb{Z}[x]$.

\begin{theorem}\label{polytheorem}
Using algorithm~\ref{mainalg} on the All-Coefficients matrix above provides a complete irreducible factorization of a polynomial $f$ of degree $N$, coefficients of bit-length $\leq \log{A}$, and $r$ irreducible factors when reduced modulo a prime $p$ in $$\mathcal{O}(N^2r^4[N+\log{A}])$$ CPU operations.  The cost of creating the All-Coefficients matrix adds $\mathcal{O}(N^4[N^2+\log^2{A}])$ CPU operations using classical arithmetic (suppressing small factors $\log{r}$ and $\log^2{p}$) to the complexity bound.
\end{theorem}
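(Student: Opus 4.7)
The plan is to simply plug the All-Coefficients matrix into Theorem~\ref{overallcomplexity} and verify the parameter assumptions, then add the separate (non-lattice) cost of constructing that matrix via Hensel lifting. There is no new lattice theory to do here: the argument from~\cite{BHKS,phd} already reduces factoring to producing an $(\alpha,B)$-reduced basis of the given knapsack-type lattice.

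First I would fix the parameters as input to Algorithm~\ref{mainalg}. The matrix has $r$ initial identity rows, $N$ columns of $x_{i,j}$'s, and diagonal entries $P_j = p^{a-b_j}$, so in the notation of the main complexity bound we are in the square ($c = r+N$) regime, except that the hypothesis $p^{a-b_j} > 2^{N^2+N\log A}$ gives $|P_j| > 2\alpha^{4r+2}B^2$ for \emph{every} $j$, so Definition~\ref{pj_def} holds with $k=0$, and hence $c = \mathcal{O}(r)$. The search parameter is $B = \sqrt{r+1}$, so $\log B = \mathcal{O}(\log r)$ is absorbed by $c$, i.e.\ $c+\log B = \mathcal{O}(r)$. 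For the bit-size of the entries, the largest $P_j$ is bounded by $p^a$, which in the setting of~\cite{BHKS} satisfies $\log X = \mathcal{O}(N^2 + N\log A) = \mathcal{O}(N(N+\log A))$ (the $x_{i,j}$ are already reduced modulo $p^{a-b_j}$, so column-wise the bound on all entries is the same).

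Second, I would substitute directly into Theorem~\ref{overallcomplexity}, using $r \le N$ (there are at most $N$ irreducible factors of a degree-$N$ polynomial), which yields $r+N = \mathcal{O}(N)$. The lattice reduction cost becomes
\[
\mathcal{O}\!\bigl((r+N)\,c^3\,(c+\log B)\,[\log X + (r+N)(c+\log B)]\bigr)
\;=\;\mathcal{O}\!\bigl(N\,r^3\,r\,[N(N+\log A)+Nr]\bigr),
\]
and since $r \le N$ the bracket collapses to $\mathcal{O}(N(N+\log A))$, giving the advertised
\[
\mathcal{O}\!\bigl(N^2 r^4 [N+\log A]\bigr).
\]

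Third, I would account for the cost of producing the All-Coefficients matrix, which is independent of the lattice algorithm. This requires multi-factor Hensel lifting of $f$ to precision $p^a$ where $a = \mathcal{O}(N+\log A)$, followed by forming the coefficients of $f_i' \cdot f/f_i \bmod p^a$ for each $i$; by the standard bounds (see~\cite{mca}) this is $\mathcal{O}(N^4[N^2+\log^2 A])$ with classical arithmetic, up to the suppressed factors $\log r,\log^2 p$ noted in the statement. Adding the two contributions and observing that the Hensel-lifting term is additive (not multiplicative) gives the theorem.

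The only place that needs real care is checking that the hypotheses of Algorithm~\ref{mainalg} really do apply, in particular that $B = \sqrt{r+1} \ge \sqrt{5}$ for the relevant range of $r$ and that the inequality $p^{a-b_j} \ge 2\alpha^{4r+2}B^2$ used to get $c = \mathcal{O}(r)$ is a consequence of the size of $p^a$ established in~\cite{BHKS}. These are both immediate once the $p^{a-b_j} > 2^{N^2+N\log A}$ bound is in hand, so no nontrivial obstacle arises; the complexity is purely a matter of carefully substituting $c = \mathcal{O}(r)$, $\log B = \mathcal{O}(\log r)$ and $\log X = \mathcal{O}(N(N+\log A))$ into the general bound.
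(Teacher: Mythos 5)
Your proposal is correct and takes essentially the same approach the paper uses: identify $B=\sqrt{r+1}$, $c=\mathcal{O}(r)$ (via the bound $p^{a-b_j}>2^{N^2+N\log A}>2\alpha^{4r+2}B^2$ giving $k=0$ in Definition~\ref{pj_def}), $\log X=\mathcal{O}(N(N+\log A))$, substitute into Theorem~\ref{overallcomplexity} using $r\leq N$, and add the separate Hensel-lifting cost from~\cite{mca}. The arithmetic simplification to $\mathcal{O}(N^2r^4[N+\log A])$ checks out.
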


The following chart gives a complexity bound comparison of our algorithm with the factorization algorithm presented by Sch\"onhage in~\cite{Schonhage} we estimate both bounds using classical arithmetic and fast FFT-based arithmetic~\cite{fft}.  We also suppress all $\log{N}$, $\log{r}$, $\log{p}$, and $\log \log {A}$ terms.

\begin{center} 
\begin{tabular}{|c|c|}
\hline
Classical Gradual\_LLL & \raisebox{11pt}{~}$\mathcal{O}(N^3r^4 + N^2r^4\log{A} + N^6 + N^4\log^2{A})$\\
\hline
Classical Sch\"onhage & \raisebox{11pt}{~}$\mathcal{O}(N^8+N^5\log^3{A})$\\
\hline
Fast Gradual\_LLL & \raisebox{11pt}{~}$\mathcal{O}(N^3r^3+N^2r^3\log{A})$\\
\hline
Fast Sch\"onhage & \raisebox{11pt}{~}$\mathcal{O}(N^6+N^4\log^2{A})$\\
\hline
\end{tabular}
\end{center}

The Sch\"onhage algorithm is not widely implemented because of its impracticality.  For most polynomials, $r$ is much smaller than $N$.  Our main algorithm will reduce the All-Coefficients matrix with a competitive practical running time, but constructing the matrix itself will require more Hensel lifting than seems necessary in practice.  In~\cite{phd} a similar switch-complexity bound to section~\ref{psect} is given on a more practical factoring algorithm.

\subsubsection*{Algebraic number reconstruction}\label{lenstra}

The problem of finding a minimal polynomial from an approximation of a complex root was attacked in~\cite{lenstra} using lattice reduction techniques using knapsack-type bases.  For an extensive treatment see~\cite{lovasz}.  

\begin{theorem}\label{algnumrecon}
Suppose we know $\mathcal{O}(d^2+d\log{H})$ bits of precision of a complex root $\alpha$ of an unknown irreducible polynomial, $h(x)$, where the degree of $h$ is $d$ and its maximal coefficient has absolute value $\leq H$.  Algorithm~\ref{mainalg} can be used to find $h(x)$ in $\mathcal{O}(d^7 + d^5\log^2{H})$ CPU operations.
\end{theorem}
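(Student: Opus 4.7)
The plan is to encode the problem as a knapsack-type lattice in the sense of Section~\ref{intro} and apply Algorithm~\ref{mainalg}. Following the classical Lenstra construction from~\cite{lenstra,lovasz}, I would form a matrix with $r = d+1$ identity rows (one per monomial $1, x, \ldots, x^d$) and $N = 2$ extra columns whose $i$-th entries are $\lfloor C\,\mathrm{Re}(\tilde\alpha^{i-1})\rceil$ and $\lfloor C\,\mathrm{Im}(\tilde\alpha^{i-1})\rceil$ for a scaling constant $C = 2^{P}$ with $P = \Theta(d^2 + d \log H)$. Two $P_j$ entries, both of size $C$, are placed in these extra columns, so that reductions modulo $P_j$ absorb the error in $\tilde\alpha$. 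All entries of the resulting matrix are then bounded by $X$ with $\log X = \mathcal{O}(d^2 + d \log H)$.

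Next I would verify correctness via Theorem~\ref{taccuracy}. Any integer polynomial $g(x) = \sum g_i x^i$ of degree $\leq d$ produces a lattice vector whose first $d+1$ coordinates are its coefficients and whose last two coordinates are $\approx C\cdot g(\alpha)$. In particular, the coefficient vector of the unknown $h$ has norm at most $B := \sqrt{d+1}\,H$, so $\log B = \mathcal{O}(\log d + \log H)$. The classical Lenstra analysis shows that $P = \Theta(d^2 + d \log H)$ bits are exactly enough that every nonzero lattice vector of norm $\leq B$ corresponds to an integer multiple of $h$; hence the $(\alpha,B)$-reduced sub-lattice $L'$ returned by Algorithm~\ref{mainalg} contains $h$, and $h$ can be read off its shortest element.

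For the complexity I would invoke Theorem~\ref{overallcomplexity}. Since $|P_j| = C = 2^{\mathcal{O}(d^2 + d \log H)}$ dominates $2\alpha^{4r+4k+2}B^2 = 2^{\mathcal{O}(d + \log H)}$, condition~\eqref{pj} holds with $k = 0$ and we are in the regime $c = \mathcal{O}(r) = \mathcal{O}(d)$. Substituting $r+N = \mathcal{O}(d)$, $c = \mathcal{O}(d)$, $\log B = \mathcal{O}(\log d + \log H)$, and $\log X = \mathcal{O}(d^2 + d \log H)$ into
\[
\mathcal{O}\bigl((r+N)\,c^3\,(c+\log B)\,[\log X + (r+N)(c+\log B)]\bigr)
\]
yields $\mathcal{O}\bigl(d^4(d+\log H)\cdot d(d+\log H)\bigr) = \mathcal{O}(d^5(d+\log H)^2) = \mathcal{O}(d^7 + d^5 \log^2 H)$, as required.

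The main obstacle I expect is not the complexity substitution but the calibration of $P$: one must check that the promised $\mathcal{O}(d^2 + d \log H)$ bits of precision are actually sufficient to force every lattice vector of norm $\leq B$ to be a $\mathbb{Z}$-multiple of $h$ rather than a polynomial that merely approximates a vanishing at $\alpha$. This is precisely the threshold established in~\cite{lenstra,lovasz}, so I would cite it directly; the rest of the proof is routine bookkeeping aligning the paper's parameters $(r, N, X, B, c)$ with the problem's parameters $(d, H)$.
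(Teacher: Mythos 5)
Your proposal follows essentially the route the paper intends: set up the classical Kannan--Lenstra--Lov\'asz lattice ($r=d+1$ identity rows, $N=2$ columns encoding real and imaginary parts of powers of the approximate root, $\log X = \mathcal{O}(d^2+d\log H)$, $\log B = \mathcal{O}(\log d + \log H)$), cite the classical threshold analysis for why $\mathcal{O}(d^2+d\log H)$ bits of precision suffice, and substitute into Theorem~\ref{overallcomplexity}; your arithmetic $\mathcal{O}(d^5(d+\log H)^2)=\mathcal{O}(d^7+d^5\log^2 H)$ checks out. Two small cautions: (i) the phrase ``reductions modulo $P_j$ absorb the error in $\tilde\alpha$'' is misleading --- with $r+N < 2r+1$ the $P_j$-optimization doesn't engage anyway ($c=\min(2r+1,\,r+N)=r+N$), and plugging into the no-$P_j$ bound of Theorem~\ref{overallcomplexity} gives exactly the same $\mathcal{O}(d^7+d^5\log^2 H)$, so the $P_j$ rows are optional here; (ii) $B$ should bound the norm of the full lattice vector for $h$, including the two scaled evaluation entries $\lfloor C\,h(\tilde\alpha)\rceil$, not just the coefficient block, but since those entries are $\mathrm{poly}(d)\cdot H$ under the usual normalization $|\alpha|\le 1$, this only perturbs $B$ by a polynomial-in-$d$ factor and leaves $\log B=\mathcal{O}(\log d+\log H)$ unchanged.
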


This new complexity is an improvement over the ${\rm L}^2$ algorithm which would use $\mathcal{O}(d^9 + d^7\log^2{H})$ CPU operations to reduce the same lattice.  Although, one can prove a better switch-complexity with a two-column knapsack matrix by using~\cite[Lem.\ 2]{hanrot} to bound the determinant of the lattice as $\mathcal{O}(X^2)$ and thus the potential function from~\cite{LLL} is $\mathcal{O}(X^{2d})$, leading to a switch complexity of $\mathcal{O}(d\log{X})$ (posed as an open question in~\cite[sec.\ 5.3]{stehle}).  Using this argument the complexity for ${\rm L}^2$ is reduced to $\mathcal{O}(d^8 + d^6\log^2{H})$.

\textbf{Acknowledgements.}  We thank Damien Stehl\'e, Nicolas Brisebarre, and Val\'erie Berth\'e for many helpful discussions.  Also Ivan Morel for introducing us to H-LLL.  This work was partially funded by the LaRedA project of the Agence Nationale de la Recherche, it was also supported in part by a grant from the National Science Foundation.  It was initiated while the second author was hosted by the Laboratoire d'Informatique de Robotique et de Micro\'electronique de Montpellier (LIRMM).


\begin{thebibliography}{99}
\bibitem{ajtai} 
M.~Ajtai {\em The shortest vector problem in L2 is NP-hard for randomized reductions}, STOC 1998, pp.~10--19.

\bibitem{akhavi}
A. Akhavi, {\em The optimal LLL algorithm is still polynomial in fixed dimension}, {Theor. Comp. Sci.} {vol. 297} {iss. 1-3} Mar. 2003, LATIN, pp.~3--23.

\bibitem{Belabas} 
K.~Belabas {\em A relative van Hoeij algorithm over number fields},
J. Symb. Comp. {\bf 37} 2004, pp.~641--668.

\bibitem{BHKS} K. Belabas, M. van Hoeij, J. Kl\"uners, and
A. Steel, {\em Factoring polynomials over global fields},
preprint arXiv:math/0409510v1 (2004).

\bibitem{fft}
D. Bernstein, {\em Multiprecision Multiplication for Mathematicians},
{accepted by Advances in Applied Mathematics} find at http://cr.yp.to/papers.html\#m3, 2001.

\bibitem{bright}
C. Bright, {\em Vector Rational Number Reconstruction},
{Masters Thesis}, University of Waterloo, 2009.

\bibitem{rsa}
D. Coppersmith, {\em Small solutions to polynomial equations, and low exponent RSA vulnerabilities},
{J. of Cryptology} 10, pp. 233--260, 1997.

\bibitem{mca} 
J. von zur Gathen and J. Gerhard. {\em Modern Computer Algebra}. Cambridge University Press, 1999.

\bibitem{goldstein}
D. Goldstein and A. Mayer, {\em On the equidistribution of Hecke points}, {Forum Mathematicum} {\bf 15} 2003, pp.~165--189.

\bibitem{hanrot}
G. Hanrot, {\em LLL: a tool for effective diophantine approximation},
{LLL+25 2007}, pp. 81--118.

\bibitem{Hoeij}
M. van Hoeij, {\em Factoring polynomials and the knapsack problem},
J. Num. The., {\bf 95} 2002, pp.~167--189.

\bibitem{dsa}
N. A. Howgrave-Graham and N. P. Smart, {\em Lattice attacks on digital signature schemes},
{Design, Codes and Cryptography}, 23, pp. 283--290, 2001.

\bibitem{kaltofen}
E. Kaltofen,{\em On the complexity of finding short vectors in integer lattices}, {EUROCAL'83},LNCSv.162,pp.235--244

\bibitem{lenstra}
R. Kannan, A.~K. Lenstra, and L.~Lov{\'a}sz, {\em Polynomial Factorization and Nonrandomness of Bits of Algebraic and Some Transcendental Numbers}, {STOC} 1984, pp.~191--200.

\bibitem{LLL}
A.~K. Lenstra, H.~W. Lenstra, Jr., and L.~Lov{\'a}sz, {\em Factoring
  polynomials with rational coefficients}, {Math. Ann.} {\bf 261} 1982,
  pp.~515--534.

\bibitem{flags}
H.~W. Lenstra, {\em Flags and lattice basis reduction}, {Euro. Cong. Math.} {v. 1} 2001, Verlag, Basel.

\bibitem{lovasz}
L.~Lov{\'a}sz, {\em An Algorithmic Theory of Numbers, Graphs and Convexity}, {SIAM} 1986

\bibitem{svp}
D. Micciancio, {\em The Shortest Vector Problem is {NP}-hard to approximate to within some constant},
{SIAM journal on Computing} v. 30 num. 6, pp. 2008--2035, 2001.

\bibitem{HLLL}
I. Morel, D. Stehl\'e, and G. Villard, {\em H-LLL: Using Householder Inside LLL},
{ISSAC'09}, pp.~271--278.

\bibitem{fpLLL}
P. Nguyen and D. Stehl\'e, {\em Floating-point LLL revisited},
{Eurocrypt 2005}, v. 3494 LNCS, pp.~215--233.

\bibitem{fpLLL2}
P. Nguyen and D. Stehl\'e, {\em An LLL Algorithm with Quadratic Complexity}, J. Cmp. v.39n.3 2009, pp.~874--903.

\bibitem{phd}
A. Novocin, {\em Factoring Univariate Polynomials over the Rationals}, {Ph.D.} {Flor. St. Univ.} 2008.

\bibitem{knapsack}
A. Odlyzko, {\em The rise and fall of knapsack cryptosystems}, {Crytology and Computational Number Theory}, vol. 42 of Proc. of Symposia in Applied Mathematics, A.M.S., pp. 75--88, 1990.

\bibitem{schnorr2}
C. P. Schnorr, {\em A more efficient algorithm for lattice basis reduction}, J. of Algo. {\bf 9} 1988, pp. 47--62.

\bibitem{Schonhage}
A. Sch\"onhage, {\em Factorization of univariate integer polynomials by 
Diophantine approximation and an improved basis reduction algorithm}, ICALP ’84, LNCS {\bf 172}, pp.~436--447.

\bibitem{stehle}
D. Stehl\'e, {\em Floating-point LLL: Theoretical and Practical Aspects},{LLL+25 2007}, pp.~33--61.

\bibitem{arne}
A. Storjohann, {\em Faster algorithms for integer lattice basis reduction},
Technical report 1996, ETH Z\"urich.
\end{thebibliography}
\end{document}